\documentclass[11pt]{article}

\usepackage[slantedGreek]{mathpazo}
\usepackage[pdftex]{graphicx}
\usepackage{amsfonts}
\usepackage{setspace}
\usepackage{amssymb}
\usepackage{amsthm}
\usepackage{amsmath}
\usepackage{mathtools}
\usepackage{subcaption}
\usepackage{bm}
\usepackage{xcolor}
\definecolor{darkblue}{rgb}{0.0,0.0,0.55}

\usepackage{enumitem}
\usepackage{booktabs}
\usepackage{array}
\usepackage{multirow}
\usepackage{algorithm}
\usepackage{float}
\usepackage{algpseudocode}
\usepackage{placeins}
\usepackage{needspace}
\usepackage[sort,authoryear]{natbib}
\usepackage[colorlinks,citecolor=darkblue,urlcolor=darkblue,linkcolor=darkblue]{hyperref}

\theoremstyle{plain}
\newtheorem{theorem}{Theorem}[section]

\newtheorem{proposition}[theorem]{Proposition}

\theoremstyle{definition}

\theoremstyle{remark}

\newcommand\dring[1]{%
  {
   \mathop{\kern0pt #1}\limits^{%
     \vbox to-1.85ex{
       \kern-2ex
       \hbox to 0pt{\hss\normalfont\kern.1em \r{}\kern-.45em \r{}\hss}%
       \vss
     }%
   }%
  }%
}

\newcommand{\bfh}{\mathbf{h}}

\newcommand{\bfu}{\mathbf{u}}

\newcommand{\bfx}{\mathbf{x}}
\newcommand{\bfy}{\mathbf{y}}

\newcommand{\calG}{{\cal G}}

\newcommand{\calN}{{\cal N}}

\newcommand{\calO}{{\cal O}}
\newcommand{\calP}{{\cal P}}

\newcommand{\calX}{{\cal X}}

\newcommand{{\Pa}}{\text{Pa}}
\newcommand{{\child}}{\text{child}}

\newcommand{\suppsec}[1]{Section~#1 of the Supplementary Material}
\newcommand{\suppsecp}[1]{Section~#1 of the Supplementary Material}

\allowdisplaybreaks
\title{\vspace{-1cm}\baselineskip=20pt\bf Bayesian Treed Vecchia Gaussian Process Emulation via Sequential Monte Carlo}
\date{}
\begin{document}
\maketitle
\baselineskip=15pt
\vspace{-1.75cm}
\begin{center}
Cong Ye and Pulong Ma \\
Department of Statistics, Iowa State University\\ 
2438 Osborn Dr, Ames, Iowa 50011 \\
\hskip 5mm\\
\end{center}

\begin{abstract}
Computer models (or simulators) are used in many scientific domains to understand and predict real-world processes, but they are expensive to run. One way to address this hurdle is to use Gaussian processes (GPs). For complex physical systems such as storm surge, simulator outputs often exhibit heterogeneous structures, and many nonstationary GP methods become impractical with several inputs and tens of thousands of runs. To address these issues, we combine the divide-and-conquer idea of treed GP models with Vecchia approximations in a Bayesian treed Vecchia GP (TVGP) model. The model uses the Bayesian classification and regression tree prior for random recursive partitioning of the input domain. Within each leaf, a Vecchia approximation allows efficient likelihood evaluation. For Bayesian inference, we develop a sequential Monte Carlo algorithm that explores tree space as an alternative to reversible-jump Markov chain Monte Carlo. The method is demonstrated on synthetic examples with two and ten inputs and on a storm-surge simulator with seven inputs and up to 50,649 training runs. TVGP has the smallest prediction error in both synthetic examples and with the larger storm-surge training set, although its predictive intervals undercover in the example with ten inputs and the storm-surge application.

\medskip
\noindent\textbf{Keywords}: Computer experiments; Gaussian process; sequential Monte Carlo; tree methods; Vecchia approximation
\end{abstract}

\section{Introduction}\label{sec:intro}

Storm surge is one of the most dangerous hurricane hazards affecting coastal communities and has caused severe damage and loss of life.
For U.S. Atlantic tropical cyclones during 1963--2012, it is reported that storm surge accounted for 49\% of direct fatalities, while rainfall-induced freshwater flooding accounted for 27\% \citep{Rappaport2014}. \citet{Weinkle2018} estimated that hurricanes striking the continental United States during 1900--2017 caused about \$2 trillion in damage normalized to 2018 values, or under \$17 billion annually.
The National Hurricane Center (NHC) reports that storm surge was the deadliest hazard during Hurricane Ian in the year 2022, claiming 41 lives, including 36 in Lee County, Florida \citep{NHCIan2023}.

Hazard assessments quantify the risk of storm surge by evaluating a large number of synthetic or hypothetical storm scenarios to estimate water levels over broad coastal regions, identify areas exposed to severe inundation, and quantify uncertainty in flood hazard.
Such analyses are directly relevant to coastal flood studies and flood-elevation estimation for applications such as the Federal Emergency Management Agency (FEMA) National Flood Insurance Program.
For example, FEMA flood studies require estimation of surge elevations associated with multiple annual exceedance probabilities.
Post-Katrina federal studies also adopt synthetic-storm methods based on the joint-probability method \citep{Toro2010JPM}.
Related FEMA coastal hazard studies have also relied on numerical modeling of storms and coastal hydrodynamics together with storm climatology analysis and flood-hazard evaluation \citep{Niedoroda2010}.

In these studies, the ADvanced CIRCulation (ADCIRC) model \citep{adcirc,WesterinkEtAl2008}  has been widely used because it can represent surge and wave dynamics over complex coastal domains with substantial physical realism. ADCIRC has been certified by FEMA and has become one of the most popular surge prediction models since Hurricane Katrina in supporting coastal flood hazard studies \citep{FEMA2015, FEMA2017, Niedoroda2010, Cialone2017}. However, even a single simulation run with ADCIRC can be computationally expensive. 
For example, a single high-resolution ADCIRC simulation in the Southwest Florida (SWFL) study takes roughly 2,000 core-hours on a supercomputer \citep{Ma2022PPCokrig}, making uncertainty quantification (UQ) prohibitively costly.

This scientific hurdle has often been  addressed via surrogate modeling since the seminal work by \citet{Sacks1989}.
A well-established surrogate modeling framework is to use a Gaussian process (GP) (\citealp[pp.~15--86]{santner2003design}; \citealp[pp.~143--221]{gramacy2020surrogates}) for emulating computer models, because a GP can provide fast probabilistic approximation to the computer model due to its closed-form predictive distributions and theoretical properties.
For the SWFL study region, the ADCIRC simulator takes the central pressure deficit, scale pressure radius, forward speed, heading, and Holland's $B$ parameter of a storm, together with its landfall location, as input parameters, and generates storm-surge outputs over the course of storm events. The GP modeling framework has been used to emulate storm surges under various application scenarios \citep{Ma2022PPCokrig,Mccrimmon2026, KYPRIOTI2023104231,jia2016} with these storm characteristics as inputs.
However, physical knowledge indicates that the relationship between the storm-surge output and the inputs is highly nonstationary and nonlinear.
Accurately capturing this relationship plays a crucial role in estimating hazard probability over a broad ensemble of plausible storms.
These existing works follow the standard practice of GP modeling, in which a parametric stationary covariance function such as the Mat\'ern class \citep[pp.~48--51]{Stein1999} is used over the whole input domain.
Thus, they do not address the statistical problem of nonstationarity in the input space and the computational issue due to the large number of simulation runs.

This paper aims to emulate the storm-surge computer model by addressing these two issues of spatial nonstationarity and large datasets.  Standard practice of the GP modeling framework has led to the well-known big ``$n$'' problem \citep{Cressie2008,Banerjee2008} because of the high computational cost in performing the Cholesky decomposition for an $n$-by-$n$ covariance matrix. There is a large literature on scalable GP methods for addressing this problem \citep[e.g.,][]{Katzfuss2017MRA, Gramacy2015, Stein2004, Sang2012}; see \citet{Banerjee2017} for a review from a spatial modeling perspective. Vecchia approximation \citep{Vecchia1988,Datta2016,Katzfuss2021} has become one of the most promising methods. Another challenge is that for computer models, the input space often has dimension greater than two, which may pose computational challenges.  Due to the nature of the storm-surge application, it is also essential to capture spatial nonstationarity over the input space in order to obtain good predictive performance.  To capture complex spatial dependency, several nonstationary GP methodologies have been proposed in the literature.
Existing work includes, but is not limited to, (multi-scale) basis-function methods \citep{Cressie2008,Ma2020FGP, Fox2012}, spatial nonstationary covariance
functions \citep{Paciorek2006}, spatial deformation \citep{Schmidt2003}, and partition-based methods \citep{Kim2005,Gramacy2012,Konomi2014,Luo2024, Luo2022}. While modeling spatial nonstationarity is a well-known issue in spatial statistics, many existing methods developed for geostatistical data cannot be straightforwardly adapted to computer experiments with input dimension larger than two, thus leading to a challenging modeling task for the storm-surge application where storm surge is generated over a seven-dimensional input domain.

This paper addresses the nonstationarity issue through a tree-based method, which is closely related to partition-based methods.
These methods assume local GPs in subregions defined by a partition of the input domain, and then combine the local models to obtain a globally nonstationary model.
\citet{Kim2005} use Voronoi tessellation to partition the spatial domain and assume independent stationary GPs on each Voronoi cell.
The treed Gaussian process \citep{Gramacy2012} uses the Bayesian classification and regression tree (CART) prior \citep{Chipman1998,Chipman2002} to partition the domain recursively through binary splits and fits independent GPs at each leaf node.
To improve scalability, \citet{Konomi2014} combine treed GP with full-scale approximation \citep{Sang2012}.
For massive spatial datasets, \citet{Guhaniyogi2018} propose a divide-and-conquer Bayesian strategy that partitions the data into subsets, fits a Bayesian spatial process model on each subset, and then combines the subset posterior distributions to obtain approximate inference and prediction for the full dataset.
Instead of using hard partitions, \citet{Luo2024} use a minimum spanning tree (MST) prior \citep{Teixeira2019} to partition the spatial domain and use soft partitions for prediction to avoid discontinuities near partition boundaries.

While these tree-based methods successfully model spatial nonstationarity in low-dimensional domains, their extension to higher-dimensional input spaces is not straightforward for modeling computer experiments.
Specifically, existing approaches face three challenges when the input space has more than two dimensions and the dataset is large.
First, methods via the MST prior \citep{Luo2024,Teixeira2019} for the tree generating process rely on Prim's algorithm, which constructs trees iteratively.
This requires calculating pairwise distances for graph construction, leading to $\calO(n^2)$ computational cost.
While the MST approach could be well-suited for capturing highly nonstationary behavior in clustered observations in a two-dimensional spatial domain, it becomes computationally prohibitive and algorithmically difficult when the sample size is large and the input dimension exceeds two.
Second, methods utilizing the Bayesian CART prior \citep{Gramacy2012,Konomi2014} handle higher dimensions better by generating recursive, axis-aligned binary partitions; however, they could suffer from heavy computations within leaf nodes in Markov chain Monte Carlo (MCMC) tree search algorithms.
In shallow trees, a large number of observations may fall into a single leaf, rendering the local GP fitting overwhelmingly expensive in the tree growing or pruning operations.
Finally, regardless of the partitioning scheme, standard posterior learning for these tree-based GP models relies on reversible-jump Markov chain Monte Carlo (RJ-MCMC) algorithms to search over tree space, often leading to slow mixing in posterior sampling without well-tuned proposals.

This paper  develops a Bayesian \textit{treed Vecchia GP} (TVGP) model to simultaneously address the computational, modeling, and inferential challenges.
Borrowing the idea from treed GP, we use a divide-and-conquer strategy to recursively partition the input domain through the Bayesian CART prior \citep{Chipman1998,Chipman2002}.
To resolve the local big ``$n$'' problem within each leaf node, we use Vecchia approximation \citep{Katzfuss2021}, which gives linear-cost leafwise likelihood evaluation for fixed conditioning sets of bounded size.
Furthermore, to avoid relying on a single local Markov-chain trajectory through tree space, we develop a sequential Monte Carlo (SMC) algorithm for particle-based posterior inference of tree structures. This strategy is motivated by recent successes of SMC in other tree-based Bayesian nonparametric models \citep{Lakshminarayanan2013,Awaya2024,Ma2025PPT}.
Although we focus on the Bayesian CART prior, the proposed model and inference algorithm can also be adapted to other recursive partitioning schemes such as oblique-axis partitions \citep{Murthy1994} with some effort.

The contributions of this paper are as follows. First, we propose a Bayesian TVGP model that combines random recursive partitioning with Vecchia approximations to capture locally smooth but globally nonstationary structures.
Second, we develop an SMC-based posterior inference algorithm for tree-based GP models that replaces local Markov-chain traversal with breadth-first exploration. Breadth-first traversal processes nodes within each tree level by level, starting at the root. A key challenge in growing the tree involves marginal likelihood calculation for local GP models at each node. We use a Laplace approximation to the leaf marginal likelihood to evaluate the Bayes factors used in tree expansion. We derive analytic expressions for the gradient and Hessian of the integrated likelihood under the Vecchia model and use these derivatives for optimization and the Laplace approximation. The tree grows sequentially, and each step draws the decision at one leaf from the locally optimal proposal, which looks one step ahead \citep{Lin2013}.
Third, through simulation studies and a storm-surge application, we show that TVGP with the proposed SMC algorithm provides a practical alternative to classical tree-based GP methodology in UQ applications where spatial nonstationarity and large numbers of runs must be addressed.

The rest of the paper is organized as follows. Section~\ref{sec: model} presents the TVGP model, and Section~\ref{sc:bayesian_inference} derives the leaf marginal likelihood and introduces the sequential Monte Carlo method for fitting and prediction. Section~\ref{sec: simulation study} uses two synthetic examples to compare SMC with RJ-MCMC, to separate the contributions of partitioning and Vecchia likelihoods, and to compare TVGP with other GP methods. Section~\ref{sec: storm surge} presents the storm-surge application, and Section~\ref{sec: conclusion} concludes the paper. The Supplementary Material contains all proofs, implementation details, additional numerical studies, and the SMC algorithm with its computational complexity analysis.

\section{Bayesian Treed Vecchia Gaussian Process Model} \label{sec: model}

Let $\mathcal{X}\subset \mathbb{R}^d$ denote the input space and $y(\bfx)\in\mathbb{R}$ denote the output of the function $y(\cdot)$ at any input $\bfx \in \mathcal{X}$. Given $n$ data points $\bfy := \big(y(\bfx_1),\ldots,y(\bfx_n)\big)^\top$ over inputs $\{\bfx_i\}_{i=1}^n$, we model the function $y(\cdot)$ through a tree-based Gaussian process that is specified hierarchically. Let $\pi(T)$ denote the Bayesian CART prior \citep{Chipman1998, Chipman2002} for a stochastic binary tree $T$ constructed through recursive partitioning of $\calX$. The terminal nodes of $T$ define a collection of leaf regions, denoted by $\mathcal{L}_T$. Each element $A\in\mathcal{L}_T$ is a subregion of $\calX$, and the leaf regions form a partition of the input space: $\cup_{A\in\mathcal{L}_T}A=\calX$ and $A\cap A'=\emptyset$ for any two distinct leaves $A,A'\in\mathcal{L}_T$. Given $T\sim \pi(T)$, the model for $y(\cdot)$ can be written as
\begin{equation}
y(\bfx)=\sum_{A\in\mathcal{L}_T} f_A(\bfx)\,\mathbb{I}(\bfx\in A), \qquad \bfx\in\mathcal{X},
\label{eq:tgp_piecewise}
\end{equation}
where $f_A(\cdot)$ is the local process associated with leaf region $A$. We assume that $f_A(\cdot)$ is a Gaussian process corrupted with measurement error within $A$. This model combines the idea of the treed Gaussian process (TGP) model \citep{Gramacy2012} and Vecchia approximations \citep{Katzfuss2021, Vecchia1988, Datta2016}. The tree-based structure induces spatial heterogeneity as in TGP, and the Vecchia approximation of the leaf responses allows fast local likelihood evaluation at leaf nodes even in a multidimensional input space.

\subsection{Vecchia GP Prior for $f_A$}\label{sc:vecchiaGPIntro}

Conditional on $T$, the leaf-specific processes and parameters are independent across leaves. For a fixed leaf node $A \in \mathcal{L}_T$, let $\{\bfx_{A,i}\}_{i=1}^{n_A}$ denote the inputs that fall in region $A$, and write $\bfy(A)=\big(y(\bfx_{A,1}),\ldots,y(\bfx_{A,n_A})\big)^\top$ for the corresponding response vector.
Since $\bfx_{A,i} \in A$, we have $y(\bfx_{A,i}) = f_A(\bfx_{A,i})$ under Equation \eqref{eq:tgp_piecewise}.
We model the leaf-specific response surface through $f_A(\bfx) = \bfh(\bfx)^\top\gamma_A + w_A(\bfx) + \varepsilon_A(\bfx)$, $\bfx \in A$,
where $\bfh(\bfx)\in\mathbb R^p$ contains known mean functions with leaf-specific coefficients $\gamma_A\in\mathbb R^p$, $w_A(\cdot)$ is a zero-mean Gaussian process, and $\varepsilon_A(\bfx)$ is an independent nugget term. Specifically, we assume
\begin{align*}
    w_A(\cdot) \sim \calG\calP\!\left(0, \sigma_A^2 r_A(\cdot,\cdot;\phi_A)\right), \\*
    \varepsilon_A(\bfx) \stackrel{\text{ind}}{\sim} N(0, \sigma_A^2 g_A),
\end{align*}
where $\sigma_A^2 > 0$ is a variance parameter, $\phi_A=(\phi_{A,1},\ldots,\phi_{A,d})^\top$ is the $d$-dimensional vector of positive range parameters, and $g_A>0$ is the nugget-to-process variance ratio. With $\theta_A=(\phi_A,g_A)$, the leaf response vector therefore satisfies
\begin{equation*}
\bfy(A)\mid\gamma_A,\sigma_A^2,\theta_A\sim\calN\!\Big(H\gamma_A,\;\sigma_A^2R_A(\theta_A)\Big),
\end{equation*}
where $H=\{\bfh(\bfx_{A,1}),\ldots,\bfh(\bfx_{A,n_A})\}^\top$ is the $n_A\times p$ design matrix of the leafwise mean and $R_A(\theta_A)=\big(r_A(\bfx_{A,i},\bfx_{A,j};\phi_A)\big)_{i,j=1}^{n_A}+g_AI_{n_A}$. All examples in this paper use the constant mean $\bfh(\bfx)=1$, so that $H=\mathbf 1_{n_A}$.
In this work, we use a product Mat\'ern correlation function $r_A(\bfx,\bfx';\phi_A) = \prod_{j=1}^d r_0\!\left(|x_j-x_j'|;\phi_{A,j}\right)$,
with
\[
r_0(h;\phi)
=
\frac{2^{1-\nu}}{\Gamma(\nu)}
\left(\frac{h}{\phi}\right)^\nu
K_\nu\!\left(\frac{h}{\phi}\right),
\]
where $\nu$ is treated as known, $\phi_{A,j}$ is the range parameter in the $j$th input dimension, and $K_\nu(\cdot)$ is the modified Bessel function of the second kind.
When there is no spatial correlation,  this model reduces to the well-known Bayesian CART model \citep{Chipman1998, Chipman2002}.

The full Gaussian likelihood in a leaf is expensive because it requires dense matrix operations on the $n_A \times n_A$ covariance matrix.
The Vecchia approximation \citep{Vecchia1988, Katzfuss2021} provides a simple way to reduce this cost.
For any ordering of the observations in leaf $A$, the joint density can be written as a product of one-dimensional conditional densities,
\[
p(\bfy(A)\mid \gamma_A,\sigma_A^2,\theta_A)
= \prod_{i=1}^{n_A}
p(y_{A,(i)} \mid y_{A,(1)},\ldots,y_{A,(i-1)},\gamma_A,\sigma_A^2,\theta_A),
\]
with the first term interpreted as an unconditional density.
Vecchia's idea is to replace the full conditioning set $\{1,\ldots,i-1\}$ by a much smaller set of previously ordered observations.
If $N_A(i)\subseteq\{1,\ldots,i-1\}$ denotes a conditioning set with at most $m$ points, then
\begin{equation}
\label{eq:vecchia_leaf_factorization}
p(\bfy(A)\mid \gamma_A,\sigma_A^2,\theta_A)
\approx
\prod_{i=1}^{n_A}
p(y_{A,(i)} \mid \bfy_{A,N_A(i)},\gamma_A,\sigma_A^2,\theta_A).
\end{equation}
The approximation is exact when all previous observations are included in $N_A(i)$, and it becomes computationally attractive when $m \ll n_A$.
In our implementation, we compute an approximate maximin-distance ordering of the training inputs once and restrict it to the observations in each leaf by following the idea in \citet{Guinness2021}, which is also the default implementation in \citet{Katzfuss2021}. We choose $N_A(i)$ as the $\min(m,i-1)$ nearest predecessors in this restricted ordering.

For a Gaussian process, each factor in Equation~\eqref{eq:vecchia_leaf_factorization} is univariate normal and can be computed from only a small covariance submatrix.
Let $R_A(\theta_A)$ be arranged in the chosen ordering. The regression coefficients of each observation on its conditioning set, and the corresponding conditional variances, define a unit lower-triangular matrix $B_A$ and a diagonal matrix $D_A=\mathrm{diag}(D_{A,1},\ldots,D_{A,n_A})$ through
\begin{gather*}
(B_A)_{i,i}=1,\qquad
(B_A)_{i,N_A(i)}=-R_{A;i,N_A(i)}\,R^{-1}_{A;N_A(i),N_A(i)},\\*
D_{A,i}=R_{A;i,i}+(B_A)_{i,N_A(i)}\,R_{A;N_A(i),i},
\end{gather*}
with all other entries of $B_A$ equal to zero.
This is the standard Vecchia approximation of \citet{Katzfuss2021} applied to the leaf responses, as in \citet{Sigrist2022}.
It induces the sparse precision matrix
\begin{equation}\label{eq:leaf_vecchia_precision}
\widetilde Q_A = B_A^\top D_A^{-1} B_A,
\end{equation}
and the corresponding approximate covariance matrix
$
\widetilde R_A = B_A^{-1} D_A B_A^{-\top}.
$
Because each row of $B_A$ has at most $m$ off-diagonal nonzero entries, a leafwise likelihood evaluation, and each derivative evaluation of Section~\ref{sc:marginal_likelihood}, uses calculations on conditioning sets of size at most $m$ and has cost linear in $n_A$ for fixed $m$ and parameter dimension.
This sparse representation in Equation~\eqref{eq:leaf_vecchia_precision} is used for every leafwise likelihood, marginal likelihood, and posterior computation in Section~\ref{sc:bayesian_inference}.
The prediction formulas in Section~\ref{sc:prediction} are based on the same construction extended to joint vectors of observed and prediction locations within a leaf.

\subsection{Bayesian CART Prior for $T$}\label{sc:CARTpriorIntro}
Following the Bayesian CART framework \citep{Chipman1998}, we specify a prior distribution over binary trees conditional on the training inputs $\mathbf X$. We write this distribution as $\pi(T\mid\mathbf X)$ and suppress $\mathbf X$ elsewhere when the conditioning is understood.
Its generative process recursively partitions the input domain $\mathcal{X}$ into a set of non-overlapping hyper-rectangles corresponding to the leaf nodes $A$ of the tree $T$.
The probability of any specific tree $T$ is defined as the product of the probabilities of the splitting decisions at the internal nodes and the stopping decisions at the leaf nodes.
For any node $A$ at depth $d_A$ in the tree (where the root has depth zero), the process proceeds as follows:

\paragraph{Splitting probability.}
Let $\mathcal C_j(A)$ be a finite set of candidate cut points on axis $j$ that satisfy the prescribed tree constraints, namely a minimum number of training observations in each child and, when imposed, a maximum depth. In our implementation, the candidates are empirical quantiles of the inputs in $A$, so these sets are determined by the training inputs. Write $\mathcal J(A)=\{j:\mathcal C_j(A)\ne\varnothing\}$ for the axes on which a split is possible. We use the depth-dependent probability $\rho(A)=\alpha(1+d_A)^{-\beta}$, with $\alpha\in(0,1)$ and $\beta\ge0$.
Here, $\alpha$ controls the baseline probability of a split, and $\beta$ determines its decay with depth. Set $\rho^\star(A)=\rho(A)\mathbb I\{\mathcal J(A)\ne\varnothing\}$ and draw $S(A)\sim\operatorname{Bernoulli}(\rho^\star(A))$. Thus, a node with no valid split is terminal with probability one.

\paragraph{Splitting rule.}
Conditional on a split ($S(A)=1$), a splitting rule is generated by selecting a cutting dimension $J(A)$ and a cut point $L(A)$.
We reserve $L(A)$ for the scalar cut location at node $A$ and $\mathcal{L}_T$ for the leaf set of tree $T$.
\begin{enumerate}
    \item \textbf{Dimension selection.} The axis $J(A)$ is drawn from a categorical distribution on $\mathcal J(A)$, with probabilities $\lambda_j(A)$ summing to one over this set. We use $\lambda_j(A)=1/|\mathcal J(A)|$ for $j\in\mathcal J(A)$, where $|\cdot|$ denotes the cardinality of a set.
    \item \textbf{Cut-point selection.} Given the chosen axis $j = J(A)$, the cut point $L(A)$ has the discrete uniform distribution $F_{L,A,j}(\{c\})=P\{L(A)=c\mid S(A)=1,J(A)=j\}=1/|\mathcal{C}_j(A)|$ for $c\in\mathcal{C}_j(A)$. The cut points lie strictly inside the range of node $A$ on axis $j$ and satisfy the tree constraints above.
\end{enumerate}

The joint prior probability of a tree $T$ is the product of the probabilities associated with the variables $\{S(A), J(A), L(A)\}$ for all internal nodes and the stopping probabilities for all leaf nodes:
\begin{equation*}
\pi(T\mid\mathbf X) = \left[ \prod_{A \in \mathcal{I}_T} \rho^\star(A)\lambda_{J(A)}(A)F_{L,A,J(A)}(\{L(A)\}) \right]\prod_{A^\prime \in \mathcal{L}_T} \{1 - \rho^\star(A^\prime)\},
\end{equation*}
where $\mathcal{I}_T$ denotes the set of internal nodes and $\mathcal{L}_T$ denotes the set of leaf nodes. This generative formulation ensures a valid probability distribution over the discrete space of tree structures. Because $\rho(A)$ decreases with depth when $\beta>0$, the prior favors shallow trees.

\section{Bayesian Inference}\label{sc:bayesian_inference}
Our inferential target is the marginal posterior distribution over tree structures, $p(T\mid\bfy,\mathbf X)\propto\pi(T\mid\mathbf X)\prod_{A\in\mathcal{L}_T}\Phi(A)$, where $\Phi(A)$ is the leafwise marginal likelihood of Section~\ref{sc:marginal_likelihood}. Conditional on a tree $T$, the leaf-specific mean coefficients and variance are integrated out analytically under a proper normal--inverse-gamma prior, and the covariance parameters are integrated out by a Laplace approximation that uses the Vecchia representation.

\subsection{Marginal Likelihood}\label{sc:marginal_likelihood}

For a fixed leaf $A$ and conditional on $\theta_A$, the leaf model of Section~\ref{sc:vecchiaGPIntro} and its proper normal--inverse-gamma prior are
\begin{equation}\label{eq:nig_leaf_model}
\begin{gathered}
\bfy(A)\mid\gamma_A,\sigma_A^2,\theta_A\sim N_{n_A}(H\gamma_A,\sigma_A^2R_A),\\
\gamma_A\mid\sigma_A^2\sim N_p(m_0,\sigma_A^2V_0),\qquad
\sigma_A^2\sim\mathrm{IG}(a_0,b_0),
\end{gathered}
\end{equation}
where $R_A=R_A(\theta_A)$ is positive definite, the inverse-gamma distribution uses the shape--scale convention, and the hyperparameters $m_0$, $V_0\succ0$, $a_0>0$, and $b_0>0$ are common to all candidate leaves. Write $Q_A=R_A^{-1}$ and $P_0=V_0^{-1}$, and define
\begin{gather*}
M_A=P_0+H^\top Q_AH,\qquad
h_A=P_0m_0+H^\top Q_A\bfy(A),\\
S_{0,A}^2=\bfy(A)^\top Q_A\bfy(A)+m_0^\top P_0m_0-h_A^\top M_A^{-1}h_A,
\end{gather*}
together with $a_A=a_0+n_A/2$ and $b_A=b_0+S_{0,A}^2/2$.

\begin{proposition}[Integrated leaf likelihood]\label{prop:laplaceapprox}
Under model~\eqref{eq:nig_leaf_model}, integrating out $\gamma_A$ and $\sigma_A^2$ gives
\[
q^I(A;\theta_A)
=\frac{b_0^{a_0}\,\Gamma(a_A)}{(2\pi)^{n_A/2}\,\Gamma(a_0)}\,
|R_A|^{-1/2}\,|V_0|^{-1/2}\,|M_A|^{-1/2}\,b_A^{-a_A}.
\]
\end{proposition}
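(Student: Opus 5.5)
The plan is a direct Gaussian computation: integrate out $\gamma_A$ by completing the square, then integrate out $\sigma_A^2$ using the inverse-gamma normalizing constant. I will write the joint density of $\bfy(A)$, $\gamma_A$ and $\sigma_A^2$ given $\theta_A$ as the product of three factors. The likelihood factor is $(2\pi\sigma_A^2)^{-n_A/2}|R_A|^{-1/2}\exp\{-(\bfy(A)-H\gamma_A)^\top Q_A(\bfy(A)-H\gamma_A)/(2\sigma_A^2)\}$. The prior on $\gamma_A$ contributes $(2\pi\sigma_A^2)^{-p/2}|V_0|^{-1/2}\exp\{-(\gamma_A-m_0)^\top P_0(\gamma_A-m_0)/(2\sigma_A^2)\}$. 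The inverse-gamma prior contributes $b_0^{a_0}\Gamma(a_0)^{-1}(\sigma_A^2)^{-a_0-1}\exp(-b_0/\sigma_A^2)$.

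The first key step is to complete the square in $\gamma_A$. I will expand the sum of the two quadratic forms as
\[
\gamma_A^\top M_A\gamma_A-2\gamma_A^\top h_A+\bfy(A)^\top Q_A\bfy(A)+m_0^\top P_0m_0 .
\]
This equals $(\gamma_A-M_A^{-1}h_A)^\top M_A(\gamma_A-M_A^{-1}h_A)+S_{0,A}^2$, with $M_A$, $h_A$ and $S_{0,A}^2$ exactly as defined before Proposition~\ref{prop:laplaceapprox}. Integrating the Gaussian kernel over $\gamma_A\in\mathbb R^p$ gives $(2\pi\sigma_A^2)^{p/2}|M_A|^{-1/2}$, since $M_A\succ0$ because $P_0\succ0$ and $H^\top Q_AH\succeq0$. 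This factor cancels the $(2\pi\sigma_A^2)^{-p/2}$ from the prior. What remains is
\[
(2\pi)^{-n_A/2}|R_A|^{-1/2}|V_0|^{-1/2}|M_A|^{-1/2}\,\frac{b_0^{a_0}}{\Gamma(a_0)}\,(\sigma_A^2)^{-(a_0+n_A/2)-1}\exp\{-(b_0+S_{0,A}^2/2)/\sigma_A^2\}.
\]

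The second step is to integrate over $\sigma_A^2>0$. I will use $\int_0^\infty s^{-a-1}e^{-b/s}\,ds=\Gamma(a)b^{-a}$ with $a=a_A$ and $b=b_A$, which yields the stated formula. This requires $b_A>0$. I expect the only point needing care to be checking that $S_{0,A}^2\ge 0$. I will argue this by noting that $S_{0,A}^2$ is the minimum over $\gamma$ of a sum of two nonnegative quadratic forms, so it is nonnegative. Then $b_A\ge b_0>0$ and the integral converges.

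Everything else is bookkeeping of constants, so the main obstacle is just getting the algebra of the completed square right. In particular, I need to confirm that the cross terms combine into $-2\gamma_A^\top(P_0m_0+H^\top Q_A\bfy(A))$, using the symmetry of $Q_A$ and $P_0$.
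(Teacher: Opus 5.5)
Your proposal is correct and follows essentially the same route as the paper: complete the square in $\gamma_A$ to obtain $(\gamma_A-\bar\gamma_A)^\top M_A(\gamma_A-\bar\gamma_A)+S_{0,A}^2$, integrate the Gaussian kernel to get $(2\pi\sigma_A^2)^{p/2}|M_A|^{-1/2}$, which cancels the prior's $(2\pi\sigma_A^2)^{-p/2}$, and then apply $\int_0^\infty x^{-a_A-1}e^{-b_A/x}\,dx=\Gamma(a_A)b_A^{-a_A}$. Your explicit checks that $M_A\succ0$ and that $S_{0,A}^2\ge0$, so that $b_A\ge b_0>0$ and the integral converges, are small points the paper's proof leaves implicit.
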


\vspace{-\parskip}
The proof of Proposition~\ref{prop:laplaceapprox} is given in \suppsec{A.1}. The covariance parameters of a leaf can be weakly identified, particularly when the leaf holds few observations. For isotropic Mat\'ern models with fixed smoothness in at most three dimensions, fixed-domain asymptotic theory shows that the variance and the range cannot both be estimated consistently \citep{Zhang2004}. Integrating out $\sigma_A^2$ in Proposition~\ref{prop:laplaceapprox} removes the variance, but $q^I(A;\theta_A)$ can remain nearly flat as the ranges grow, when the correlation matrix approaches a matrix of ones, or shrink, when it approaches the identity matrix.

We therefore use the jointly robust prior of \citet{Gu2019}, which regularizes the posterior near these two limits. For $\zeta_{A,j}=1/\phi_{A,j}$, its density on $\zeta_{A,j}>0$ and $g_A>0$ is proportional to $t_A^a\exp(-bt_A)$, where $t_A=\sum_{j=1}^dC_{A,j}\zeta_{A,j}+g_A$, $C_{A,j}$ is $n_A^{-1/d}$ times the span of the $j$th input in $A$, $a=0.2$, and $b=1$. We assume that each input has positive span in every candidate leaf, so $C_{A,j}>0$. The normalizing constant of the prior, and the Jacobian for the log coordinates below, are given in \suppsec{B.1}. Let $\pi(\theta_A)$ denote the corresponding density on $\theta_A$, so that the leafwise marginal likelihood is $\Phi(A)=\int_{\Theta_A}q^I(A;\theta_A)\pi(\theta_A)\,d\theta_A$ with $\Theta_A=(0,\infty)^{d+1}$.

We assume that $\Phi(A)$ is finite for every candidate leaf. Because the normal--inverse-gamma and jointly robust priors are proper, $\Phi(A)$ is a probability density in $\bfy(A)$, so the assumption holds for almost every response vector and, in particular, with probability one when the distribution of the responses has a density. \suppsec{A.1} gives this argument for the full covariance matrix and its Vecchia representation.

In subsequent computations, we replace $R_A$ by $\widetilde R_A=B_A^{-1}D_AB_A^{-\top}$ and denote the resulting conditional integrated likelihood by $\widetilde q^I$. The remaining integral over the covariance parameters is evaluated using a Laplace approximation in the coordinates $\psi_A=(\log\phi_{A,1},\ldots,\log\phi_{A,d},\log g_A)^\top\in\mathbb R^{d+1}$. Write $d_\theta=d+1$ and $\theta_A=\tau_A(\psi_A)=\exp(\psi_A)$, with the exponential applied componentwise. Define
\[
\ell_A(\psi_A)
=\log\widetilde q^I\!\left(A;\tau_A(\psi_A)\right)
+\log\pi\!\left(\tau_A(\psi_A)\right)
+\log|\det J_{\tau_A}(\psi_A)|.
\]
With the Vecchia representation and this change of variables, $\Phi(A)$ is approximated by $\int_{\mathbb R^{d+1}}\exp\{\ell_A(\psi_A)\}\,d\psi_A$.
Let $\hat\psi_A$ be a local maximizer of $\ell_A$ inside the optimization bounds, and let
$G_{\psi,A}=-\nabla_{\psi_A}^2\ell_A(\hat\psi_A)$. When
$G_{\psi,A}$ is positive definite, the Laplace approximation gives
\begin{equation}
\log\Phi(A)\approx\log\widehat\Phi(A)
:= \ell_A(\hat\psi_A)
+\frac{d_\theta}{2}\log(2\pi)
-\frac{1}{2}\log|G_{\psi,A}|.
\label{eq:laplace_target}
\end{equation}

\vspace{-\parskip}

For the derivatives, fix the Vecchia ordering and conditioning sets of leaf $A$. Let $\widetilde q^I$, $M_A$, $h_A$, and $b_A$ denote the quantities of Proposition~\ref{prop:laplaceapprox} with $Q_A$ replaced by $\widetilde Q_A=B_A^\top D_A^{-1}B_A$, and write $V_A=M_A^{-1}$, $\bar\gamma_A=V_Ah_A$, and $e_A=\bfy(A)-H\bar\gamma_A$. Dots denote derivatives with respect to $\psi_A$: $\dot D_k=\partial D_A/\partial\psi_{A,k}$, $\dot{\widetilde Q}_k=\partial\widetilde Q_A/\partial\psi_{A,k}$, and $\dot M_k=H^\top\dot{\widetilde Q}_kH$ for $k=1,\ldots,d_\theta$, and $\ddot D_{k\ell}$, $\ddot{\widetilde Q}_{k\ell}$, and $\ddot M_{k\ell}$ are the corresponding second derivatives. Write $s_k=e_A^\top\dot{\widetilde Q}_ke_A$ and $s_{k\ell}=e_A^\top\ddot{\widetilde Q}_{k\ell}e_A-2(H^\top\dot{\widetilde Q}_ke_A)^\top V_A(H^\top\dot{\widetilde Q}_\ell e_A)$.

\begin{proposition}[Gradient]\label{prop:veechia_laplace}
If $B_A$ and $D_A$ are continuously differentiable in $\psi_A$, then for $k=1,\ldots,d_\theta$,
\[
\frac{\partial}{\partial\psi_{A,k}}\log\widetilde q^I
=-\frac12\operatorname{tr}(D_A^{-1}\dot D_k)
-\frac12\operatorname{tr}(V_A\dot M_k)
-\frac{a_A}{2b_A}\,s_k.
\]
\end{proposition}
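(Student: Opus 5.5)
We differentiate the closed form of Proposition~\ref{prop:laplaceapprox}, with $R_A$ replaced by $\widetilde R_A$, term by term. Taking logarithms,
\[
\log\widetilde q^I=\text{const}-\tfrac12\log|\widetilde R_A|-\tfrac12\log|M_A|-a_A\log b_A ,
\]
where the constant collects $b_0$, $a_0$, $a_A$, $n_A$ and $V_0$. None of these depend on $\psi_A$, since $a_A=a_0+n_A/2$ is fixed. The proof therefore reduces to three derivative computations.

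\emph{Determinant term.} Because $B_A$ is unit lower triangular, $|B_A|=1$. Hence $|\widetilde R_A|=|B_A|^{-2}|D_A|=\prod_i D_{A,i}$. Since $D_A$ is diagonal, positive, and differentiable by hypothesis,
\[
\partial_k\log|\widetilde R_A|=\operatorname{tr}(D_A^{-1}\dot D_k).
\]
This gives the first term of the gradient. Note that the derivative of $B_A$ does not enter here.

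\emph{$M_A$ term.} We have $M_A=P_0+H^\top\widetilde Q_AH$, and $M_A$ is positive definite. Jacobi's formula then gives
\[
\partial_k\log|M_A|=\operatorname{tr}(V_A\dot M_k),
\qquad \dot M_k=H^\top\dot{\widetilde Q}_kH .
\]
Here $\dot{\widetilde Q}_k$ exists because $\widetilde Q_A=B_A^\top D_A^{-1}B_A$ is a product of $C^1$ factors, and $D_A$ is invertible. This gives the second term.

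\emph{$b_A$ term.} We compute $\partial_k S_{0,A}^2$. Write
\[
S_{0,A}^2=\bfy^\top\widetilde Q_A\bfy+m_0^\top P_0m_0-h_A^\top V_Ah_A ,
\]
with $\dot h_k=H^\top\dot{\widetilde Q}_k\bfy$ and $\dot V_k=-V_A\dot M_kV_A$. Differentiating,
\[
\partial_k S^2_{0,A}=\bfy^\top\dot{\widetilde Q}_k\bfy-2\dot h_k^\top\bar\gamma_A+\bar\gamma_A^\top\dot M_k\bar\gamma_A .
\]
Substituting the definitions of $\dot h_k$ and $\dot M_k$, this becomes
\[
\bfy^\top\dot{\widetilde Q}_k\bfy-2\bfy^\top\dot{\widetilde Q}_kH\bar\gamma_A+\bar\gamma_A^\top H^\top\dot{\widetilde Q}_kH\bar\gamma_A=(\bfy-H\bar\gamma_A)^\top\dot{\widetilde Q}_k(\bfy-H\bar\gamma_A)=s_k ,
\]
using the symmetry of $\dot{\widetilde Q}_k$. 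Therefore
\[
\partial_k(-a_A\log b_A)=-\frac{a_A}{b_A}\cdot\frac{s_k}{2},
\]
which is the third term.

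The only step needing care is this last one: recognising that the three pieces of $\partial_k S_{0,A}^2$ complete the square to the residual quadratic form $e_A^\top\dot{\widetilde Q}_ke_A$. This is an envelope-type cancellation, reflecting that $\bar\gamma_A$ minimizes the penalized quadratic. The first two terms are standard matrix calculus. We also note that $b_A>0$, so the logarithm is differentiable: $b_0>0$, and $S_{0,A}^2\ge0$ because it is the minimum over $\gamma$ of the nonnegative form
\[
(\bfy-H\gamma)^\top\widetilde Q_A(\bfy-H\gamma)+(\gamma-m_0)^\top P_0(\gamma-m_0).
\]
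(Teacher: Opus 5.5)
Your proof is correct and follows the paper's route: the same decomposition $\log\widetilde q^I=C-\tfrac12\log|D_A|-\tfrac12\log|M_A|-a_A\log b_A$, the identity $|\widetilde R_A|=|D_A|$ from the unit lower-triangular $B_A$, and Jacobi's formula for $\log|M_A|$. The only difference is in the $b_A$ term. You differentiate $h_A^\top V_Ah_A$ directly, using $\dot V_k=-V_A\dot M_kV_A$, and complete the square. The paper instead writes $S_{0,A}^2=e_A^\top\widetilde Q_Ae_A+\delta^\top P_0\delta$ with $\delta=\bar\gamma_A-m_0$, and uses the penalized normal equation $H^\top\widetilde Q_Ae_A=P_0\delta$ to cancel the $\partial\bar\gamma_A/\partial\psi_k$ terms; this is the same envelope-type cancellation you point out.
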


\begin{proposition}[Hessian]\label{prop:nig_vecchia_hessian}
If $B_A$ and $D_A$ are twice continuously differentiable in $\psi_A$, then for $k,\ell=1,\ldots,d_\theta$,
\[
\begin{aligned}
\frac{\partial^2\log\widetilde q^I}{\partial\psi_{A,k}\,\partial\psi_{A,\ell}}
={}&\frac12\operatorname{tr}\bigl(D_A^{-1}\dot D_\ell D_A^{-1}\dot D_k-D_A^{-1}\ddot D_{k\ell}\bigr)\\
&+\frac12\operatorname{tr}\bigl(V_A\dot M_\ell V_A\dot M_k-V_A\ddot M_{k\ell}\bigr)
-a_A\Bigl(\frac{s_{k\ell}}{2b_A}-\frac{s_ks_\ell}{4b_A^2}\Bigr).
\end{aligned}
\]
\end{proposition}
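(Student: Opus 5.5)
The plan is to differentiate the gradient expression of Proposition~\ref{prop:veechia_laplace} once more with respect to $\psi_{A,\ell}$. We treat its three terms separately. First write, from Proposition~\ref{prop:laplaceapprox} with $Q_A$ replaced by $\widetilde Q_A$,
\[
\log\widetilde q^I=\text{const}-\tfrac12\log|\widetilde R_A|-\tfrac12\log|M_A|-a_A\log b_A .
\]
Since $B_A$ is unit lower triangular, $|\widetilde R_A|=|D_A|$. Hence $\log|\widetilde R_A|=\sum_i\log D_{A,i}$, and the first gradient term is $-\tfrac12\operatorname{tr}(D_A^{-1}\dot D_k)$. Twice continuous differentiability of $B_A,D_A$ makes $\widetilde Q_A$, $M_A$ (positive definite, since $P_0\succ0$) and $b_A$ twice continuously differentiable. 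The mixed partials therefore exist and are symmetric.

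\textbf{Determinant terms.} We use $\partial_\ell D_A^{-1}=-D_A^{-1}\dot D_\ell D_A^{-1}$. Then
\[
\partial_\ell\bigl[-\tfrac12\operatorname{tr}(D_A^{-1}\dot D_k)\bigr]=\tfrac12\operatorname{tr}(D_A^{-1}\dot D_\ell D_A^{-1}\dot D_k)-\tfrac12\operatorname{tr}(D_A^{-1}\ddot D_{k\ell}).
\]
Similarly, $\partial_\ell V_A=-V_A\dot M_\ell V_A$ and $\partial_\ell\dot M_k=\ddot M_{k\ell}$ give the second trace pair. These are routine.

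\textbf{Quadratic-form term.} This is the main obstacle. First I will reconfirm $\partial_k b_A=\tfrac12 s_k$, since it underlies the last gradient term. Write
\[
2b_A-2b_0=\bfy^\top\widetilde Q_A\bfy+m_0^\top P_0m_0-h_A^\top V_Ah_A ,
\]
and note $\dot h_k=H^\top\dot{\widetilde Q}_k\bfy$. Differentiating gives
\[
\bfy^\top\dot{\widetilde Q}_k\bfy-2\bar\gamma_A^\top H^\top\dot{\widetilde Q}_k\bfy+\bar\gamma_A^\top H^\top\dot{\widetilde Q}_kH\bar\gamma_A=e_A^\top\dot{\widetilde Q}_ke_A ,
\]
so $\partial_k b_A=s_k/2$.

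Next differentiate $s_k=e_A^\top\dot{\widetilde Q}_ke_A$ once more:
\[
\partial_\ell s_k=e_A^\top\ddot{\widetilde Q}_{k\ell}e_A+2e_A^\top\dot{\widetilde Q}_k\,\partial_\ell e_A .
\]
Here $\partial_\ell e_A=-H\partial_\ell\bar\gamma_A$, and
\[
\partial_\ell\bar\gamma_A=-V_A\dot M_\ell\bar\gamma_A+V_AH^\top\dot{\widetilde Q}_\ell\bfy=V_AH^\top\dot{\widetilde Q}_\ell e_A .
\]
Hence $\partial_\ell s_k=e_A^\top\ddot{\widetilde Q}_{k\ell}e_A-2(H^\top\dot{\widetilde Q}_ke_A)^\top V_A(H^\top\dot{\widetilde Q}_\ell e_A)=s_{k\ell}$. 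The key simplification to watch is the cancellation combining $\bfy$ and $H\bar\gamma_A$ into $e_A$.

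Finally, $a_A$ does not depend on $\psi_A$. So
\[
\partial_\ell\Bigl(-\frac{a_A s_k}{2b_A}\Bigr)=-a_A\Bigl(\frac{s_{k\ell}}{2b_A}-\frac{s_k s_\ell}{4b_A^2}\Bigr),
\]
using $\partial_\ell b_A=s_\ell/2$. Summing the three pieces yields the stated Hessian.
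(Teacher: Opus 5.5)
Your proposal is correct and follows the paper's proof essentially step for step. Both differentiate the three gradient terms separately, using $\partial_\ell\bar\gamma_A=V_AH^\top\dot{\widetilde Q}_\ell e_A$ to obtain $s_{k\ell}$ and $\partial_\ell b_A=s_\ell/2$ for the residual term. The only cosmetic difference is that you get $\partial_k b_A$ and $\partial_\ell\bar\gamma_A$ by differentiating $h_A^\top V_Ah_A$ and $V_Ah_A$ directly, whereas the paper differentiates the normal equation $M_A\bar\gamma_A=P_0m_0+H^\top\widetilde Q_A\bfy(A)$ and its penalized form.
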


The gradient and Hessian of $\ell_A$ are obtained by adding the corresponding derivatives of the covariance-parameter prior and the change-of-variables Jacobian. Proofs of Propositions~\ref{prop:veechia_laplace} and~\ref{prop:nig_vecchia_hessian} are given in Sections~A.2 and~A.3 of the Supplementary Material.

We locate $\hat\psi_A$ with L-BFGS-B, the limited-memory Broyden--Fletcher--Goldfarb--Shanno method with bound constraints \citep{Byrd1995}, followed by damped Newton iterations \citep[pp.~48--55]{NocedalWright2006}, with the analytic derivatives in Propositions~\ref{prop:veechia_laplace} and~\ref{prop:nig_vecchia_hessian}. Each node is optimized from a fixed sequence of starting values (from the fitted parameters of its parent in the storm-surge fits), and a node that is evaluated again reuses its stored fit, so each node has a single $\widehat\Phi(A)$ within a fit. \suppsec{B.1} gives the starting values, bounds, and convergence checks.

\subsection{Sequential Monte Carlo Method}\label{sc:smc}
Sequential Monte Carlo (SMC) approximates a sequence of probability distributions, called targets, by a population of $N$ weighted samples, called particles (\citealp[pp.~53--77]{Liu2001}; \citealp{ELEMENTOFSMC}). Successive targets are defined on increasingly larger spaces, and the last target is the distribution of interest. At each step, the particles are extended by draws from a proposal distribution, and their importance weights are multiplied by incremental weights that correct for drawing from the proposal instead of the target. When a few particles carry most of the weight, the population is resampled. For TVGP, we use the node-wise version of the top-down particle filter of \citet{Lakshminarayanan2013} for Bayesian decision trees. Each step makes the split-or-stop decision at one leaf, and the targets use the leaf marginal likelihood approximation $\widehat\Phi(A)$ of Equation~\eqref{eq:laplace_target}.

Each particle is a binary tree grown from the root, together with a first-in, first-out list of its active leaves, the leaves whose split-or-stop decision has not yet been made. At each step, the leaf at the front of the list is removed and either declared terminal or split into two children, which join the end of the list. This rule, also used by \citet{Lakshminarayanan2013} and \citet{Awaya2024}, processes leaves in breadth-first order, so all nodes at one depth are processed before any node at the next depth. Let $T_t^i$ denote the tree of particle $i$ after $t$ steps and $\mathcal U_t^i$ its list of active leaves; we drop the superscript $i$ when a statement holds for every particle. All particles start from the root-only tree, with the root as the only active leaf and with weight $1/N$. A particle whose list is empty is called complete and is left unchanged at later steps.

The target at step $t$ is a posterior distribution over partially grown trees whose prior is the Bayesian CART probability of the decisions made in the first $t$ steps \citep{Lakshminarayanan2013}. Its prior part multiplies the probabilities of Section~\ref{sc:CARTpriorIntro} over the processed nodes, and its likelihood part multiplies the leaf marginal likelihood approximations over the current leaves, active ones included. Up to a normalizing constant,
\begin{equation}
\widetilde\pi_t(T_t)=
\prod_{A\in\mathcal I_{T_t}}
\rho^\star(A)\lambda_{J(A)}(A)F_{L,A,J(A)}\!\left(\{L(A)\}\right)
\prod_{A\in\mathcal L_{T_t}\setminus\mathcal U_t}\{1-\rho^\star(A)\}
\prod_{A\in\mathcal L_{T_t}}\widehat\Phi(A),
\label{eq:partial_tree_target}
\end{equation}
where $\mathcal I_{T_t}$ and $\mathcal L_{T_t}$ are the internal nodes and the leaves of $T_t$, so that $\mathcal L_{T_t}\setminus\mathcal U_t$ contains its terminal leaves. The target at step $t$ is a function of $T_t$ alone. The fixed processing order determines $\mathcal U_t$ from $T_t$, and the same observations in a leaf always give the same $\widehat\Phi(A)$ within a fit (Section~\ref{sc:marginal_likelihood}). For a complete particle with tree $T$, Equation~\eqref{eq:partial_tree_target} reduces to $\widetilde\pi(T)=\pi(T)\prod_{A\in\mathcal L_T}\widehat\Phi(A)$,
which is proportional to the posterior distribution of $T$ with each leaf marginal likelihood replaced by its approximation $\widehat\Phi(A)$.

The proposal moves an incomplete particle by one decision at the first leaf $A$ of its list $\mathcal U_{t-1}$. By Equation~\eqref{eq:partial_tree_target}, stopping at $A$ multiplies the target by $1-\rho^\star(A)$, and splitting $A$ on axis $j\in\mathcal J(A)$ at $c\in\mathcal C_j(A)$ into children $A_L(j,c)$ and $A_R(j,c)$ multiplies it by $\rho^\star(A)\lambda_j(A)F_{L,A,j}(\{c\})\mathrm{BF}_A(j,c)$, where
\begin{equation}
\mathrm{BF}_A(j,c)=\frac{\widehat\Phi(A_L(j,c))\,\widehat\Phi(A_R(j,c))}{\widehat\Phi(A)}
\label{eq:one_step_ratio}
\end{equation}
is the Bayes factor of the split against keeping $A$ as one leaf. We call the factor by which a decision multiplies the target its target factor.

Drawing the decision from its prior, as in the bootstrap particle filter \citep{Gordon1993}, ignores these Bayes factors and can require many particles for an accurate approximation \citep{ELEMENTOFSMC}. We instead use the locally optimal proposal $\mathbb Q_A$, which draws each decision with probability proportional to its target factor and is the one-step optimal proposal of \citet{Lakshminarayanan2013}. For an active leaf $A$, let $Z_j(A)=\sum_{c\in\mathcal C_j(A)}F_{L,A,j}(\{c\})\mathrm{BF}_A(j,c)$ be the prior-weighted average Bayes factor of the splits on axis $j\in\mathcal J(A)$, and let $\kappa_A=1-\rho^\star(A)+\rho^\star(A)\sum_{j\in\mathcal J(A)}\lambda_j(A)Z_j(A)$ be the sum of the target factors over all decisions at $A$. Proposition~\ref{prop:posterior_sjl} gives $\mathbb Q_A$ in the order of the prior in Section~\ref{sc:CARTpriorIntro}, drawing the split indicator, then the axis, and then the cut point.

\begin{proposition}[Locally optimal proposal]\label{prop:posterior_sjl}
If $\mathcal J(A)=\varnothing$, then $\kappa_A=1$ and $\mathbb Q_A$ declares $A$ terminal with probability one. Otherwise, $\mathbb Q_A$ splits $A$ with probability $1-\{1-\rho^\star(A)\}/\kappa_A$; given a split, it draws the axis $j\in\mathcal J(A)$ with probability proportional to $\lambda_j(A)Z_j(A)$ and then the cut point $c\in\mathcal C_j(A)$ with probability proportional to $F_{L,A,j}(\{c\})\mathrm{BF}_A(j,c)$.
\end{proposition}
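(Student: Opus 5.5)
The plan is to write the locally optimal proposal explicitly as a distribution over the finite set of decisions at $A$. Then we factor it in the same order as the prior: split indicator, axis, cut point. The decision set at an active leaf $A$ is $\{\text{stop}\}\cup\{(j,c): j\in\mathcal J(A),\,c\in\mathcal C_j(A)\}$. By the discussion following Equation~\eqref{eq:partial_tree_target}, the target factor of stopping is $1-\rho^\star(A)$. The target factor of the split $(j,c)$ is $\rho^\star(A)\lambda_j(A)F_{L,A,j}(\{c\})\mathrm{BF}_A(j,c)$. By definition, $\mathbb Q_A$ assigns each decision its target factor divided by the total. Summing over $c$ first gives $\rho^\star(A)\lambda_j(A)Z_j(A)$, and summing then over $j$ shows that the total is exactly $\kappa_A$. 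Every factor is finite and nonnegative, because each $\widehat\Phi$ is positive and finite and the candidate sets are finite. Also $\kappa_A\ge 1-\rho^\star(A)>0$, since $\rho^\star(A)\le\alpha<1$. Hence $\mathbb Q_A$ is a well-defined probability mass function.

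Next I would treat the degenerate case. If $\mathcal J(A)=\varnothing$, then $\rho^\star(A)=0$ by its definition, and the sum over $j$ is empty. So $\kappa_A=1$, and the only decision with positive mass is stopping, which has mass $1/\kappa_A=1$.

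For $\mathcal J(A)\neq\varnothing$, the marginal probability of splitting is $1-P(\text{stop})=1-\{1-\rho^\star(A)\}/\kappa_A$. Given a split, the conditional probability of axis $j$ is the ratio of $\rho^\star(A)\lambda_j(A)Z_j(A)$ to $\sum_{j'}\rho^\star(A)\lambda_{j'}(A)Z_{j'}(A)$. This is proportional to $\lambda_j(A)Z_j(A)$, because $\rho^\star(A)$ cancels. Given the split and the axis $j$, the conditional probability of cut $c$ is $F_{L,A,j}(\{c\})\mathrm{BF}_A(j,c)/Z_j(A)$, which is proportional to $F_{L,A,j}(\{c\})\mathrm{BF}_A(j,c)$. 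Multiplying the three conditionals recovers the joint mass, which confirms the factorization.

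The only real obstacle is keeping the conditional probabilities well-defined. Dividing by $Z_j(A)$ and by the split mass requires them to be positive. Here I would note that $\rho^\star(A)=\rho(A)=\alpha(1+d_A)^{-\beta}>0$ whenever $\mathcal J(A)\ne\varnothing$. I would also note that $Z_j(A)>0$ for $j\in\mathcal J(A)$, since $\mathcal C_j(A)$ is nonempty and each Bayes factor in Equation~\eqref{eq:one_step_ratio} is a ratio of positive finite Laplace approximations. The remainder is bookkeeping.
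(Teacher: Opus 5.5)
Your proposal is correct and follows essentially the same route as the paper's proof in Section~A.4. Both arguments write $\mathbb Q_A$ as the target factors $1-\rho^\star(A)$ and $\rho^\star(A)\lambda_j(A)F_{L,A,j}(\{c\})\mathrm{BF}_A(j,c)$ normalized by their sum $\kappa_A$, then sum over cut points and axes to obtain the split, axis, and cut-point conditionals, and handle $\mathcal J(A)=\varnothing$ through $\rho^\star(A)=0$. Your added checks that $\kappa_A>0$, $\rho^\star(A)>0$, and $Z_j(A)>0$ make explicit the positivity that the paper's divisions assume without stating.
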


Under $\mathbb Q_A$, the odds of splitting are $\rho^\star(A)\{1-\rho^\star(A)\}^{-1}\sum_{j\in\mathcal J(A)}\lambda_j(A)Z_j(A)$, the prior odds of splitting times the prior-weighted average Bayes factor over all candidate splits. The leaf is therefore split more often than under the prior exactly when this average Bayes factor exceeds one. Because $\mathbb Q_A$ weighs all candidate splits at a leaf by their prior probabilities and Bayes factors, splits with large Bayes factors are proposed often, whereas RJ-MCMC proposes local changes to a complete tree and accepts or rejects them.

After the move, the weight of particle $i$ is updated as $W_t^i\propto W_{t-1}^iw_t^i$, and the weights are normalized to sum to one. The incremental weight is
\[
w_t^i=\frac{\widetilde\pi_t(T_t^i)}{\widetilde\pi_{t-1}(T_{t-1}^i)\,\mathbb Q_A(T_t^i\mid T_{t-1}^i)},
\]
where $\mathbb Q_A(T_t^i\mid T_{t-1}^i)$ is the probability under $\mathbb Q_A$ of the decision that turns $T_{t-1}^i$ into $T_t^i$. A complete particle has incremental weight one, because Equation~\eqref{eq:partial_tree_target} with an empty list equals $\widetilde\pi(T)$ at every later step.

\begin{proposition}[Incremental weight]\label{prop: increment_weights}
For a particle moved by $\mathbb Q_A$, the incremental weight equals $\kappa_A$.
\end{proposition}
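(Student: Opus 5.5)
The plan is to write out the ratio $\widetilde\pi_t(T_t)/\widetilde\pi_{t-1}(T_{t-1})$ explicitly as the target factor of the decision taken at $A$. Then we check that $\mathbb Q_A$ assigns to that decision exactly the target factor divided by $\kappa_A$. The incremental weight is then $\kappa_A$ regardless of which decision was drawn.

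\textbf{Step 1: the target ratio.} Let $A$ be the front leaf of $\mathcal U_{t-1}$. Compare Equation~\eqref{eq:partial_tree_target} at $t$ and $t-1$; all factors attached to nodes other than $A$ and its possible children cancel.
\begin{itemize}
\item If $A$ is declared terminal, the tree is unchanged. Only $A$ moves from $\mathcal U$ to the terminal leaves, so the ratio is $1-\rho^\star(A)$.
\item If $A$ is split on axis $j$ at $c$, then $A$ becomes internal and contributes $\rho^\star(A)\lambda_j(A)F_{L,A,j}(\{c\})$. The leaf factor $\widehat\Phi(A)$ is replaced by $\widehat\Phi(A_L(j,c))\widehat\Phi(A_R(j,c))$, and both children are active, so they carry no stopping factor. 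The ratio is therefore $\rho^\star(A)\lambda_j(A)F_{L,A,j}(\{c\})\mathrm{BF}_A(j,c)$, using Equation~\eqref{eq:one_step_ratio}.
\end{itemize}
This is where care is needed. One must confirm that $\widehat\Phi$ for unaffected leaves is identical across the two steps, which holds because each node has a single stored $\widehat\Phi$ within a fit. One must also confirm that the processing order makes $\mathcal U_t$ a function of $T_t$.

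\textbf{Step 2: the proposal probabilities.} Using Proposition~\ref{prop:posterior_sjl}, take the case $\mathcal J(A)\neq\varnothing$.
\begin{itemize}
\item The stop probability is $\{1-\rho^\star(A)\}/\kappa_A$.
\item A given split $(j,c)$ has probability
\[
\Bigl[1-\tfrac{1-\rho^\star(A)}{\kappa_A}\Bigr]\cdot\frac{\lambda_j(A)Z_j(A)}{\sum_{j'}\lambda_{j'}(A)Z_{j'}(A)}\cdot\frac{F_{L,A,j}(\{c\})\mathrm{BF}_A(j,c)}{Z_j(A)}.
\]
Since $1-\{1-\rho^\star(A)\}/\kappa_A=\rho^\star(A)\sum_{j'}\lambda_{j'}(A)Z_{j'}(A)/\kappa_A$, the sums and the $Z_j(A)$ cancel. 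The split probability becomes $\rho^\star(A)\lambda_j(A)F_{L,A,j}(\{c\})\mathrm{BF}_A(j,c)/\kappa_A$.
\end{itemize}
This step needs $Z_j(A)>0$, which holds because the $\widehat\Phi$ values are positive. We also need distinct $(j,c)$ to yield distinct trees, so that $\mathbb Q_A(T_t\mid T_{t-1})$ is the probability of a single decision. If two pairs produced the same partition, one would sum over them; I would note that axis-aligned cuts with distinct $(j,c)$ give distinct child regions, or define the tree as labelled by its decisions.

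\textbf{Step 3: combine and handle the degenerate case.} Dividing the Step 1 ratio by the Step 2 probability gives $\kappa_A$ in both the stop and split cases. In the case $\mathcal J(A)=\varnothing$, we have $\rho^\star(A)=0$ and $\kappa_A=1$. The target ratio for stopping is then $1$ and the proposal probability is $1$, so the weight is $1=\kappa_A$.

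The main obstacle is Step 1, specifically the bookkeeping of which factors in Equation~\eqref{eq:partial_tree_target} change. The new children appear in $\mathcal L_{T_t}$ but not in $\mathcal L_{T_t}\setminus\mathcal U_t$, and the $\widehat\Phi$ of every other leaf is unchanged. Once that is settled, the remainder is the algebraic cancellation above.
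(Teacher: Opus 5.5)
Your proposal is correct and follows the paper's proof. Both compute the target ratio as the target factor of the decision at $A$, namely $1-\rho^\star(A)$ or $\rho^\star(A)\lambda_j(A)F_{L,A,j}(\{c\})\mathrm{BF}_A(j,c)$. Both then observe that $\mathbb Q_A$ assigns that decision its target factor divided by $\kappa_A$, and both treat the case $\mathcal J(A)=\varnothing$ the same way.

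The only difference is direction. The paper takes $\mathbb Q_A$ as the distribution proportional to target factors on the joint decision set, so $\kappa_A$ appears directly as its normalizer and Proposition~\ref{prop:posterior_sjl} is derived from it. You instead multiply the sequential factors of Proposition~\ref{prop:posterior_sjl} back together, which is equally valid.
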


Proofs of Propositions~\ref{prop:posterior_sjl} and~\ref{prop: increment_weights} are given in \suppsec{A.4}. The proposal is optimal in the sense that, given the particle before the move, the move adds no variability to the weights \citep{ELEMENTOFSMC}. Because $\kappa_A$ differs across particles, the normalized weights still become uneven as the steps proceed. After every step, the effective sample size (ESS), $1/\sum_{i=1}^N(W_t^i)^2$, measures how evenly the weights are spread. If $\mathrm{ESS}<N/2$, the population is resampled by drawing $N$ particles with replacement with probabilities $W_t^1,\ldots,W_t^N$. Each draw copies both the tree and its list of active leaves, and all weights are then reset to $1/N$.

Each split leaves at least the prescribed minimum number of training observations in both children, so the number of nodes in a tree is bounded and every particle is complete after a bounded number of steps. The algorithm stops when every list is empty, and the weighted particles then approximate the distribution proportional to $\widetilde\pi(T)$. Decisions are not revisited once made. Section~\ref{sec:sim1-sampler-comparison} compares the resulting approximation with RJ-MCMC. We denote the final particles by $\{(T_i,W_i)\}_{i=1}^N$, indexing particles by a subscript from here on, and use them for prediction in Section~\ref{sc:prediction}. \suppsec{E} gives the complete algorithm and its computational cost, which is dominated by evaluating $\widehat\Phi$ for both children of every candidate split; ordering the observations and finding the conditioning sets add further work.

\subsection{Prediction}\label{sc:prediction}
We use Bayesian model averaging \citep{Hoeting1999} over the final weighted particles \(\{(T_i,W_i)\}_{i=1}^N\), fixing each leaf's covariance parameters at \(\widehat\theta_A=\tau_A(\widehat\psi_A)\). For prediction inputs $X_{*A}$ in leaf $A$, with mean design matrix $H_{*A}$, predictions may target the latent function $H_{*A}\gamma_A+w_A(X_{*A})$ or future responses with independent $N(0,\sigma_A^2g_A)$ errors; $\bfy_{*A}$ denotes either target. Observation locations precede prediction locations in the extended Vecchia ordering \citep{KatzfussPrediction2020}, and the fitted observation factors are retained.

For a leaf $A$, let $C_{*A}$ denote the conditional regression matrix and $\sigma_A^2\Omega_{*A}$ the conditional covariance of $\bfy_{*A}$ given $\bfy(A)$, $\gamma_A$, and $\sigma_A^2$ under the extended Vecchia model. With $L_{*A}=H_{*A}-C_{*A}H$, and with $V_A$ and $\bar\gamma_A$ as defined before Proposition~\ref{prop:veechia_laplace} and evaluated at $\widehat\theta_A$, define $\mu_{*A}=C_{*A}\bfy(A)+L_{*A}\bar\gamma_A$ and $\Lambda_{*A}=\Omega_{*A}+L_{*A}V_AL_{*A}^{\top}$.
Integrating out $\gamma_A$ and $\sigma_A^2$ gives
\begin{equation}
\bfy_{*A}\mid\bfy(A),T,\widehat\theta_A
\sim t_{2a_A}\!\left(
\mu_{*A},\frac{b_A}{a_A}\Lambda_{*A}
\right),
\label{eq:nig_predictive_distribution}
\end{equation}
where the second argument is the scale matrix; the covariance is $b_A\Lambda_{*A}/(a_A-1)$ when $a_A>1$. The factor construction and derivation are given in \suppsec{A.5}.

For pointwise prediction, Equation~\eqref{eq:nig_predictive_distribution} is applied separately at each new input, with a conditioning set that contains only observed locations. For a new input $\bfx_*$ in leaf $A_i=A_i(\bfx_*)$ of tree $T_i$, the predictive density $\widehat p(y_*\mid\bfy)$ is the mixture of the Student-$t$ densities $t_{2a_{A_i}}(y_*;\mu_{*A_i},b_{A_i}\Lambda_{*A_i}/a_{A_i})$ with weights $W_i$, $i=1,\ldots,N$.
For the same conditioning set, the latent and noisy-response predictive distributions of each tree have the same mean, and the noisy-response variance is larger by $b_{A_i}\widehat g_{A_i}/(a_{A_i}-1)$ when $a_{A_i}>1$.

For joint prediction, a prediction location may also condition on preceding prediction locations within its leaf. Conditional on a tree, distinct leaves are independent. A joint draw selects one tree with probability $W_i$ and then draws a multivariate Student-$t$ vector for each leaf containing prediction locations. This yields dependent predictions for averages, maxima, and other functionals over sets of inputs; \suppsec{A.5} gives the joint mixture and its moments.

\section{Simulation Study} \label{sec: simulation study}

We use two simulation examples to answer three questions about TVGP. Section~\ref{sec:sim-examples} describes the examples, the TVGP settings, and the evaluation criteria shared by all experiments. Section~\ref{sec:sim1-sampler-comparison} asks how quickly and how reliably the SMC algorithm reaches accurate predictions, compared with RJ-MCMC targeting the same posterior over trees. Section~\ref{sec:ablation-main} measures how partitioning and the Vecchia approximation each affect predictive accuracy and computing time. Section~\ref{sc:sim1} compares TVGP with other nonstationary and scalable GP methods.

\subsection{Simulation Examples and Settings}\label{sec:sim-examples}

The first example tests whether a method can follow abrupt changes without adding spurious variation elsewhere. Let $\bfx=(x_1,x_2)^\top\in[0,1]^2$. The lines $x_1=0.30$ and $x_2=0.65$ divide the domain into four rectangles $\mathcal R_1=\{x_1\le 0.30,\ x_2>0.65\}$, $\mathcal R_2=\{x_1>0.30,\ x_2>0.65\}$, $\mathcal R_3=\{x_1\le 0.30,\ x_2\le 0.65\}$, and $\mathcal R_4=\{x_1>0.30,\ x_2\le 0.65\}$, and the response is $y(\bfx)=f(\bfx)+\varepsilon(\bfx)$ with $\varepsilon(\bfx)\sim\mathcal N\{0,\sigma^2(\bfx)\}$, where
\[
f(\bfx)=
\begin{cases}
0.8\,\sin(6\pi x_1)\cos(4\pi x_2), & \bfx\in\mathcal R_1,\\
1.5(x_1-0.5)+1.0(x_2-0.5), & \bfx\in\mathcal R_2,\\
3(x_1-0.2)^2+2(x_2-0.3)^2-0.2, & \bfx\in\mathcal R_3,\\
0.1, & \bfx\in\mathcal R_4,
\end{cases}
\qquad
\sigma(\bfx)=
\begin{cases}
0.10, & \bfx\in\mathcal R_1,\\
0.05, & \bfx\in\mathcal R_2,\\
0.12, & \bfx\in\mathcal R_3,\\
0.06, & \bfx\in\mathcal R_4.
\end{cases}
\]
The mean oscillates in $\mathcal R_1$, has a linear trend in $\mathcal R_2$, is a quadratic bowl in $\mathcal R_3$, and is constant in $\mathcal R_4$. A stationary GP uses the same ranges in all four rectangles. Ranges short enough to follow the oscillation then produce spurious wiggles in the constant region, and longer ranges blur the boundaries. The noise standard deviation also varies from 0.05 to 0.12 across the rectangles.

The second example has ten inputs and tests whether a method can follow changes in the oscillation frequency across $x_1$ while ignoring inputs that do not affect the response. Let $\operatorname{expit}(u)=\{1+\exp(-u)\}^{-1}$, define $s_1=\operatorname{expit}\{60(x_1-0.30)\}$, $s_2=\operatorname{expit}\{60(x_1-0.65)\}$, and $s_3=\operatorname{expit}\{60(x_2-0.55)\}$, and set $v_1=1-s_1$, $v_2=s_1-s_2$, and $v_3=s_2$. The latent mean is $f(\bfx)=b(\bfx)+\sum_{r=1}^3v_rh_r(\bfx)$, with
\[
\begin{aligned}
b(\bfx)={}&1.5(x_2-0.5)+2(x_3-0.5)^2
          +0.75\sin(2\pi x_4x_6),\\
h_1(\bfx)={}&0.6\sin(2\pi x_5),\\
h_2(\bfx)={}&\sin(6\pi x_5)+0.35\cos(2\pi x_2),\\
h_3(\bfx)={}&1.4\sin(16\pi x_5)
 +0.5\{(1-s_3)\cos(4\pi x_4)+s_3\cos(12\pi x_4)\},
\end{aligned}
\]
and the response is $y(\bfx)=f(\bfx)+\varepsilon$ with $\varepsilon\sim\mathcal N(0,0.05^2)$. The weights $v_1$, $v_2$, and $v_3$ switch from one component to the next near $x_1=0.30$ and $x_1=0.65$. Across these two transitions, the oscillation in $x_5$ goes from one cycle to three and then to eight cycles. Within the third component, $s_3$ changes the frequency in $x_4$ along $x_2$. Inputs $x_1,\ldots,x_6$ affect the response and $x_7,\ldots,x_{10}$ do not. Unlike in the first example, the noise variance is the same everywhere. We refer to the two examples as the two-dimensional and the 10-dimensional example.

\begin{table}[t]
\centering
\caption{Simulation experiments, with $n$ the training size and $n_{\mathrm{test}}$ the test size. Each dataset consists of a training design and a separate test design, both Latin hypercube designs \citep{McKay1979} of the type in the last column, generated with the R package \texttt{lhs} \citep{lhsPackage}.}
\label{tab:sim-designs}
\vspace{2pt}
\small\setlength{\tabcolsep}{4pt}
\begin{tabular}{@{}llrrll@{}}
\toprule
Section & Example & $n$ & $n_{\mathrm{test}}$ & Datasets & Design\\
\midrule
\ref{sec:sim1-sampler-comparison}, SMC versus RJ-MCMC & both & 5,000 & 2,000 & one each & random\\
\ref{sec:ablation-main}, partitioning and Vecchia & two-dimensional & 500--10,000 & 2,000 & ten per size & improved\\
\ref{sc:sim1}, other GP methods & two-dimensional & 10,000 & 2,000 & ten & improved\\
\ref{sc:sim1}, other GP methods & 10-dimensional & 2,000 & 400 & ten & maximin\\
\bottomrule
\end{tabular}
\end{table}

Table~\ref{tab:sim-designs} lists the four experiments. The training responses add independent normal errors to the latent function of the example. The sampler comparison runs each sampler ten times on one dataset per example, with a different random seed in each run. The other experiments fit each method once to each of ten datasets per training size, which differ in their designs and in their noise.

All TVGP fits use the product Mat\'ern correlation with smoothness $\nu=2.5$, conditioning sets of size $m=30$, ten candidate cut points per input, and a minimum leaf size of 30. They also use the tree prior of Section~\ref{sc:CARTpriorIntro} and the jointly robust and normal--inverse-gamma priors of Section~\ref{sc:marginal_likelihood}. SMC uses $N=5{,}000$ particles. In the two-dimensional example, the predictions change little with $N$ above 100 or with the hyperparameters of the jointly robust prior (Sections~C.2 and~C.5 of the Supplementary Material). In the sampler comparison, $g_A$ is bounded below by $10^{-6}$ for numerical stability, trees are limited to depth three, and every node takes its candidate cut points from one set of ten quantiles per input of all training inputs. The other experiments of this section use none of these settings. \suppsec{B} gives the settings of all methods and the computing resources of the experiments.

We assess predictions of the latent function at the test inputs by the root mean squared error (RMSE) of the predictive mean, the continuous ranked probability score \citep[CRPS;][]{Gneiting2007} of each method's pointwise predictive distribution, and the coverage and mean width of its 95\% intervals. Computing time is wall-clock time. SMC evaluates the candidate splits of a node in parallel on up to 16 cores, whereas an RJ-MCMC chain, including a TGP chain, evaluates one proposal at a time. We report means over runs or datasets in Sections~\ref{sec:sim1-sampler-comparison} and~\ref{sec:ablation-main} and medians over datasets in Section~\ref{sc:sim1}.

\subsection{SMC versus RJ-MCMC}\label{sec:sim1-sampler-comparison}

The two samplers use the same priors, leaf marginal likelihood approximation, and candidate splits. They therefore target the same approximate posterior over trees. Each RJ-MCMC chain starts from the tree with a single leaf and runs 50,000 iterations with a burn-in of 20,000, using the grow, prune, change, and swap proposals of \citet{Gramacy2012}. \suppsec{C.6} gives the complete sampler settings and trace plots of the 10-dimensional RJ-MCMC chains and of the TGP chains.

\begin{figure}[t]
    \centering
    \includegraphics[width=0.95\linewidth,keepaspectratio]{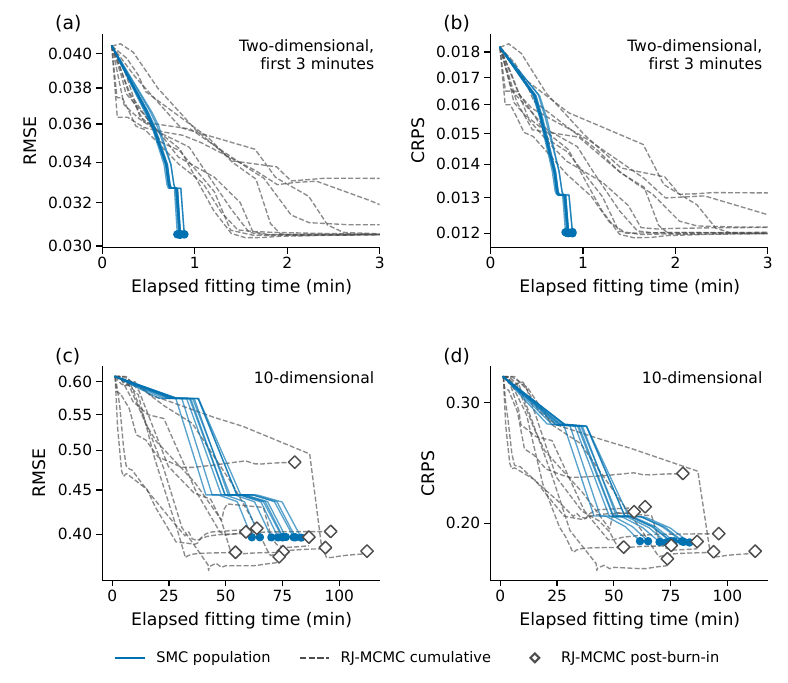}
    \caption{RMSE and CRPS versus fitting time in the two-dimensional example, (a)--(b), and the 10-dimensional example, (c)--(d), on logarithmic vertical axes; (a)--(b) show the first three minutes. Solid blue lines are ten SMC runs, each predicting from its current weighted population, with dots at completion; dashed gray lines are ten RJ-MCMC chains averaged from initialization, with open diamonds in (c)--(d) for the 30,000 retained draws. In the two-dimensional example, SMC reaches the final RJ-MCMC accuracy in about one tenth of the time RJ-MCMC takes for 50,000 iterations. In the 10-dimensional example, the SMC runs nearly coincide. The RJ-MCMC chains end at different trees, five of them with lower RMSE than SMC.}
    \label{fig:sim1-smc-rjmcmc-diagnostics}
\end{figure}

In the two-dimensional example, SMC reaches its final accuracy sooner than any RJ-MCMC chain (Figure~\ref{fig:sim1-smc-rjmcmc-diagnostics}(a)--(b)). The SMC runs finish after 0.8 minutes on average, compared with 8.7 minutes for the RJ-MCMC chains to complete 50,000 iterations. The final RMSE and CRPS of the two samplers are nearly identical. The cumulative RJ-MCMC estimates average the predictions over all iterations since initialization. In nine of the ten chains, these estimates come within 1\% of the mean final SMC RMSE after 1.4--5.1 minutes. The tenth chain stays above this level.

Although the RJ-MCMC chains predict about as well as SMC, they do not all reach the same trees. Two of the ten chains stay at trees with a lower sum of leaf log marginal likelihoods than the other eight chains and all ten SMC runs. The rank-normalized $\widehat R$ \citep{Vehtari2021} of this sum across the chains is 1.89. Values near one indicate that the chains agree.

In the 10-dimensional example, the two samplers take similar wall-clock time and are about equally accurate on average, but the SMC runs nearly coincide and the RJ-MCMC chains do not (Figure~\ref{fig:sim1-smc-rjmcmc-diagnostics}(c)--(d)). The mean fitting time is 74.0 minutes for SMC and 79.5 minutes for RJ-MCMC. The mean final RMSE is 0.397 for SMC and 0.400 for RJ-MCMC. All ten SMC runs end with the same highest-weight tree and nearly identical errors. The ten RJ-MCMC chains end at ten different trees, with RMSE after burn-in between 0.377 and 0.485. Five chains end with lower RMSE than SMC, at trees with higher approximate posterior probability than the common SMC tree. The cumulative estimates of two other chains never reach the SMC error level. Because SMC does not revise a split once it is made, runs that agree with each other can still all miss trees of higher posterior probability.

RJ-MCMC chains of TGP, the Bayesian treed GP of \citet{Gramacy2012} fitted with the \texttt{tgp} R package \citep{Gramacy2007TGP}, also remain at different trees. We run ten TGP chains of 50,000 iterations on each of the two datasets of this comparison. The rank-normalized $\widehat R$ of their log posterior is 2.43 in the two-dimensional example and 3.37 in the 10-dimensional example. RJ-MCMC chains for treed GP models can therefore remain at different trees for many iterations, in \texttt{tgp} as in our implementation.

\subsection{Contributions of Partitioning and the Vecchia Approximation}\label{sec:ablation-main}

To separate the contributions of partitioning and of the Vecchia approximation, the ablation uses the two-dimensional example and crosses two factors with two levels each: a single region or a tree learned with SMC, and the full covariance matrix or the Vecchia approximation in the likelihood. GP and Vecchia GP fit the leaf model of Section~\ref{sc:vecchiaGPIntro} to the whole domain as a single leaf, with the same priors and covariance estimation as TVGP. Treed GP and TVGP fit this model in each leaf of a learned tree. The two treed specifications share the tree prior, candidate splits, and SMC algorithm and differ only in the leaf likelihood. Unlike TGP, treed GP uses the priors and SMC algorithm of TVGP.
Each specification is fitted to the same ten datasets at each of $n=500$, $1{,}000$, and $2{,}000$. At $n=10{,}000$, only Vecchia GP and TVGP are fitted, because full-covariance likelihoods have cubic time and quadratic memory cost in $n$.

\begin{figure}[t]
\centering
\includegraphics[width=0.9\linewidth,keepaspectratio]{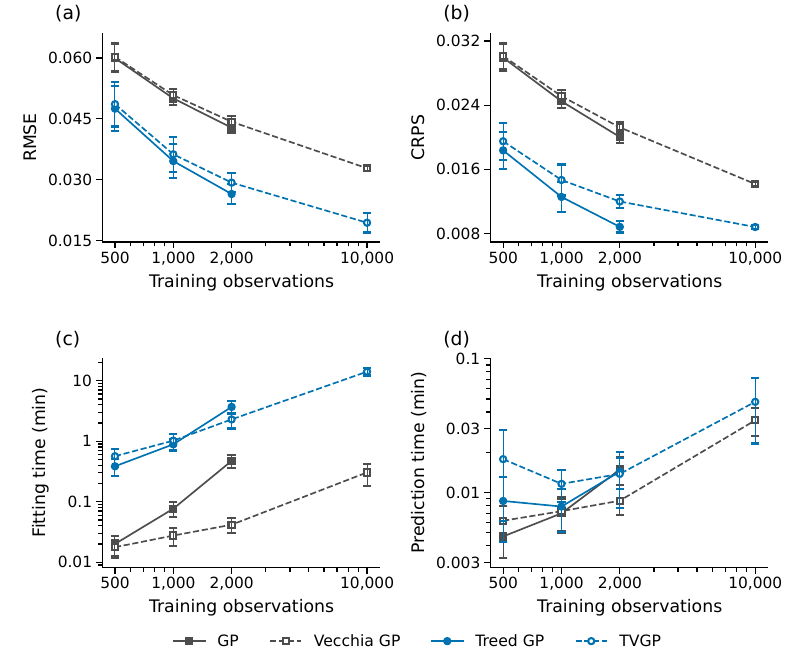}
\caption{Effects of partitioning and Vecchia likelihoods: (a) RMSE, (b) CRPS, (c) fitting time, and (d) prediction time. Points and bars give means $\pm$ one standard deviation over ten datasets at each training size. Gray denotes a single region and blue a learned tree; solid and dashed lines denote full-covariance and Vecchia likelihoods. Sample-size and time axes are logarithmic. Partitioning improves prediction under both likelihoods, and Vecchia likelihoods reduce the computing time of larger fits at a loss of accuracy in the treed fits, particularly in CRPS.}
\label{fig:ablation-components}
\end{figure}

Partitioning improves both RMSE and CRPS with either likelihood (Figure~\ref{fig:ablation-components}(a)--(b)). At $n=2{,}000$, it reduces mean RMSE by about 38\% with full-covariance likelihoods and by 34\% with Vecchia likelihoods. At $n=10{,}000$, the mean RMSE is 0.019 for TVGP and 0.033 for Vecchia GP. Separate regional fits help here because the surface contains distinct oscillatory, linear, quadratic, and constant components. The similar gains under the two likelihoods show that the benefit of partitioning does not depend on the leaf likelihood.

Within the treed fits, the Vecchia likelihood costs accuracy at every size where both likelihoods are fitted, particularly in CRPS. At $n=2{,}000$, the mean RMSE is 0.0293 for TVGP and 0.0265 for treed GP. The mean CRPS of TVGP is more than one third higher, 0.0120 against 0.0089. With conditioning sets of size 30, the Vecchia likelihood is exact only in leaves with at most 31 observations, which make up about 1\% of the particle-weighted leaves in the TVGP fits with $n=10{,}000$ (\suppsecp{C.1}).

The Vecchia approximation saves more fitting time as the sample size grows (Figure~\ref{fig:ablation-components}(c)). At $n=2{,}000$, it reduces fitting time by about 91\% for the single-region fit and by 38\% for the treed fit. At $n=500$ and $1{,}000$, TVGP takes longer than treed GP, because the sparse calculations save time only when the leaf covariance matrices are large enough to offset their overhead. Prediction takes less than 0.1 minutes for every specification (panel (d)), so fitting dominates the computing time. At $n=10{,}000$, Vecchia GP fits in 0.3 minutes on average and TVGP in 14 minutes, but Vecchia GP has much larger RMSE. A fast single-region fit therefore does not replace partitioning. \suppsec{D.2} adds coverage and the interval score.

\subsection{Comparison with Other GP Methods}\label{sc:sim1}

We compare TVGP with five methods, each representing a different approach to nonstationarity or to scalability. TGP \citep{Gramacy2012} is the classical Bayesian treed GP, with stationary GP leaves and RJ-MCMC over trees. Like TVGP, it uses a Mat\'ern correlation with $\nu=2.5$, but it differs from TVGP in its leaf likelihood, range parameters, priors, and software. The comparison with TGP therefore measures the combined effect of these differences. Scaled Vecchia (SVecchia; \citealp{KatzfussScaledVecchia}) is a global Vecchia GP with scaled inputs. It shows how well a scalable stationary model predicts without partitioning. The Vecchia-approximated deep Gaussian process (DGP VEC; \citealp{SauerDeepGP}) models nonstationarity by warping the inputs instead of partitioning them. The soft-partitioned Gaussian process (SPGP; \citealp{Luo2024}) uses soft instead of hard partitions. For each prediction input, the jump Gaussian process (JGP; \citealp{Park2022,Flowers2025}) fits a local GP that allows a jump between two regions. SPGP is used only in the two-dimensional example, because its implementation is restricted to two-dimensional inputs.

In the two-dimensional example, TVGP has the smallest prediction error (Figure~\ref{fig:tvgp_comparison_boxplots}(a)). Its median RMSE is 0.0180, compared with 0.0244 for TGP and 0.0286 for SPGP. TGP has the smallest median CRPS (0.0071), followed by TVGP (0.0087). Four of the six methods, TVGP, TGP, DGP VEC, and SPGP, have median coverage close to nominal, from 0.946 for TVGP and DGP VEC to 0.961 for TGP (Figure~\ref{fig:tvgp_comparison_boxplots}(b)). Among them, TGP has the narrowest intervals, consistent with its small CRPS. SVecchia and JGP undercover.

\begin{figure}[t]
    \centering
    \includegraphics[width=\linewidth,keepaspectratio]{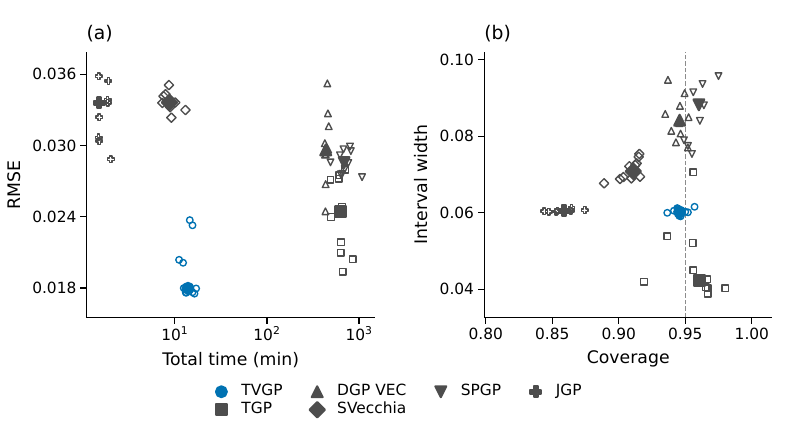}
    \caption{Prediction in the two-dimensional example: (a) RMSE versus total computing time on a logarithmic time axis and (b) 95\% interval width versus coverage, with a dashed line at 0.95. Small open symbols are the fits to the ten datasets, one per method and dataset (nine for JGP, whose fit to one dataset diverged); large filled symbols are medians. TVGP attains the lowest RMSE with near-nominal coverage, using less than one thirtieth of the time of TGP, DGP VEC, and SPGP. Mean scores of all methods appear in \suppsec{D.1}.}
    \label{fig:tvgp_comparison_boxplots}
\end{figure}

TVGP obtains this accuracy at a small fraction of the computing time of TGP, DGP VEC, and SPGP (Figure~\ref{fig:tvgp_comparison_boxplots}(a)). It takes a median of 14 minutes for fitting and prediction, compared with 7.2 hours for DGP VEC and more than ten hours for TGP and SPGP. SVecchia, at nine minutes, and JGP are faster but have much larger RMSE.

\begin{figure}[t]
    \centering
    \includegraphics[width=\linewidth,keepaspectratio]{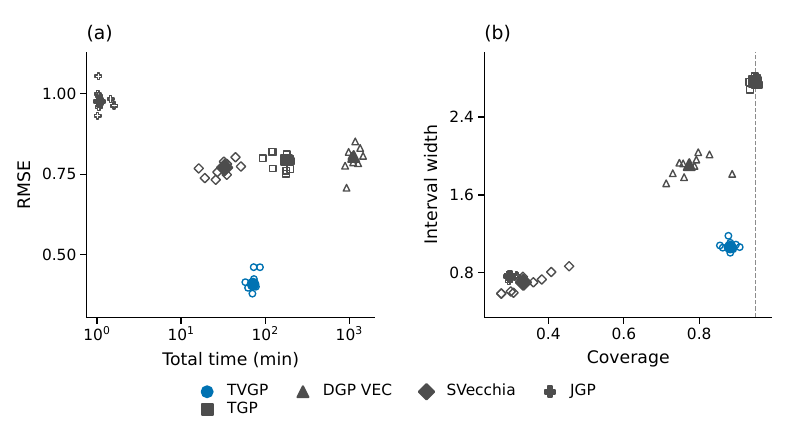}
    \caption{Prediction in the 10-dimensional example: (a) RMSE versus total computing time on a logarithmic time axis and (b) 95\% interval width versus coverage, with a dashed line at 0.95. Small open symbols are the fits to the ten datasets, one per method and dataset; large filled symbols are medians. TVGP has the lowest RMSE but undercovers. TGP has coverage closest to 0.95. Mean scores of all methods appear in \suppsec{D.1}.}
    \label{fig:ns10-axis-performance}
\end{figure}

In the 10-dimensional example\label{sec:synth-10d}, TVGP again has the smallest prediction error (Figure~\ref{fig:ns10-axis-performance}(a)), but its intervals undercover (Figure~\ref{fig:ns10-axis-performance}(b)). This comparison uses $n=2{,}000$ training inputs, because the computing times of TGP and DGP VEC increase rapidly with $n$. The median RMSE of TVGP is 0.408, compared with 0.77 for SVecchia, 0.79 for TGP, and 0.80 for DGP VEC. TVGP also has the smallest median CRPS of all methods (0.200). Its median coverage is 0.883. TGP has the coverage closest to the nominal level (0.948), with intervals more than twice as wide as those of TVGP. The coverage of the other methods ranges from 0.774 for DGP VEC down to 0.298 for JGP.

TVGP takes a median of 71.7 minutes on 16 cores for fitting and prediction, less than half the 3.0 hours of TGP and about one sixteenth of the 18.8 hours of DGP VEC. SVecchia, at 33 minutes, and JGP are faster but have much larger RMSE. A second version of this example tests boundaries that are not aligned with the coordinate axes. It replaces the two $x_1$ transitions by transitions in $(x_1+x_2)/2$ and keeps the designs, sizes, and number of datasets. TVGP again has the smallest mean RMSE and CRPS in this version (\suppsecp{C.4}), but not in a smooth two-dimensional example (\suppsecp{C.3}).

In summary, SMC finishes sooner in wall-clock time than RJ-MCMC in the two-dimensional example and about as soon in the 10-dimensional example, with similar final accuracy. Its runs agree with each other but can all miss trees of higher posterior probability. Partitioning is the main source of the improvement of TVGP over single-region fits. The Vecchia approximation reduces the fitting time of larger fits at a cost in accuracy, particularly in CRPS. Among the methods compared, TVGP has the smallest prediction error in both examples and takes less wall-clock time than TGP, DGP VEC, and SPGP. Its intervals, however, undercover in the 10-dimensional example.

\section{Storm-Surge Application} \label{sec: storm surge}
\subsection{Exploratory Data Analysis}

The storm-surge archive consists of 56,277 ADCIRC runs, one for each simulated storm scenario, and each scenario is characterized by seven input variables describing the intensity, size, pressure profile, motion, and landfall location of the storm. Table~\ref{tab:stormsurge_inputs} summarizes these inputs. The five storm parameters $\Delta P$, $R_p$, $V_f$, $\theta$, and $B$ take the values of a 3,500-point Latin hypercube design.
Then, for each of these 3,500 design points, the storm is simulated at landfall locations along the coastline with a spacing of 0.5 $R_p$ and a random initial position in the southwest direction. See \citet{Ma2022PPCokrig} for details of the computer model. These runs form an existing archive; the emulator predicts surge for additional storm scenarios without new ADCIRC runs, each of which is computationally expensive. We focus on the peak surge elevation, measured in meters, at a selected spatial location; extensions to many spatial locations are discussed in Section~\ref{sec: conclusion}. The response $y(\bfx)$ is the peak surge elevation at the input $\bfx=(\Delta P,R_p,V_f,\theta,B,\ell_x,\ell_y)^\top$.

\begin{table}[t]
\centering
\caption{Input variables for the storm-surge emulation. The storm parameters are $\Delta P$, $R_p$, $V_f$, $\theta$, and $B$, and the landfall location on the Southwest Florida coastline is given by its longitude $\ell_x$ and latitude $\ell_y$.}
\label{tab:stormsurge_inputs}
\vspace{2pt}
\small
\setlength{\tabcolsep}{4pt}
\begin{tabular}{lp{5.4cm}lp{2.6cm}}
\toprule
Symbol & Description & Unit & Range \\
\midrule
$\Delta P$ & Central pressure deficit of the storm & mb & [13, 93] \\
$R_p$ & Scale pressure radius & nautical miles & [8.11, 64.79] \\
$V_f$ & Storm's forward speed & m/s & [2, 15] \\
$\theta$ & Storm heading (clockwise from north) & degrees & [5, 120] \\
$B$ & Holland's $B$ parameter & unitless & [0.5, 1.7] \\
$\ell_x$ & Landfall longitude & degrees & [$-83.87$, $-81.06$] \\
$\ell_y$ & Landfall latitude & degrees & [25.00, 30.00] \\
\bottomrule
\end{tabular}
\end{table}

\begin{figure}[t]
\centering
\includegraphics[width=\linewidth,keepaspectratio]{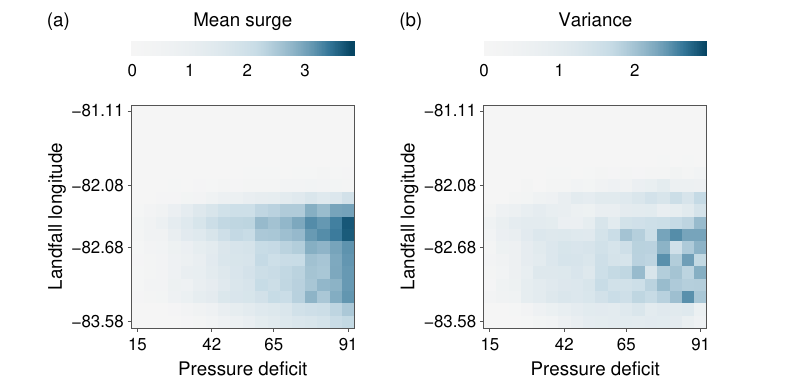}
\caption{Peak surge elevation at the selected location: (a) empirical mean in meters and (b) empirical variance in square meters over pressure-deficit and landfall-longitude bins. Pressure deficit is in millibars and longitude in degrees; tick labels give selected bin-center values. Within-bin variances include variation in the remaining storm inputs. The effect of pressure deficit depends strongly on landfall longitude.}
\label{fig:stormsurge_dp_landfall_heatmap}
\end{figure}

The selected location has longitude and latitude $(-82.052^\circ,\,26.928^\circ)$. About 62\% of the runs have zero peak surge elevation at this location. Figure~\ref{fig:stormsurge_dp_landfall_heatmap} shows the empirical mean (panel (a)) and variance (panel (b)) of peak surge elevation over pressure deficit and landfall longitude. In some landfall regions, the surge is zero or close to zero over much of the pressure-deficit range, whereas in others both the empirical mean and variance increase sharply with storm intensity.
Residual semivariograms, computed after removing additive trends in the storm inputs, also differ in level and shape across five bins of landfall locations along the coast (\suppsecp{C.8}). Together, these patterns motivate flexible regional modeling of the storm-surge response.

\subsection{Emulator Comparison at the Selected Location}

We use ten replications at each of two training sizes. In each replication, the 56,277 runs are randomly permuted. With 10,000 training runs, the first 10,000 runs form the training set and the next 1,000 the test set. With 50,649 training runs, the 5,628 runs after the first 10,000 form the test set and all remaining runs form the training set, so the smaller training and test sets are subsets of the larger ones. The zero responses are retained. For this application, TVGP uses $N=5{,}000$ particles, conditioning sets of size $m=70$, a minimum leaf size of 70, and at most 12 candidate splits per node. Because $\Phi(A)$ can be infinite under the jointly robust prior for leaves whose responses are all zero, TVGP uses independent inverse-gamma covariance priors here, under which $\Phi(A)$ is finite for every response vector (\suppsecp{B.1}). TVGP is compared with SVecchia, DGP VEC, JGP, and the local approximate Gaussian process \citep[laGP;][]{laGP}. TGP could not be fitted at these training sizes because of its computational cost, and the SPGP implementation is restricted to two inputs. Table~\ref{tab:stormsurge_results} summarizes pointwise predictions of the latent simulator response over these replications. Because the simulator is deterministic, coverage is computed for latent intervals, which omit the nugget variance and are narrower than noisy-response intervals (Section~\ref{sc:prediction}). \suppsec{B} gives the implementation details and computing resources.

\begin{table}[!htbp]
\centering
\caption{Latent predictive performance in the storm-surge application. TVGP has the smallest RMSE with 50,649 training runs and the highest coverage among the methods fitted ten times.}
\label{tab:stormsurge_results}
\vspace{2pt}
\small
\setlength{\tabcolsep}{5pt}
\begin{tabular}{lrrrrr}
\toprule
Method & RMSE & CRPS & Coverage & Mean width & Time (min) \\
\midrule
\multicolumn{6}{l}{\textit{Panel A: $n=10{,}000$, $n_{\mathrm{test}}=1{,}000$}} \\
TVGP & 0.312 & 0.129 & 0.864 & 0.504 & 31.8 \\
SVecchia & 0.296 & 0.150 & 0.783 & 0.570 & 24.7 \\
DGP VEC & 0.335 & 0.129 & 0.903 & 0.612 & 779.7 \\
JGP & 0.369 & 0.121 & 0.722 & 0.134 & 3.3 \\
laGP & 0.305 & 0.138 & 0.741 & 0.431 & 2.5 \\
\midrule
\multicolumn{6}{l}{\textit{Panel B: $n=50{,}649$, $n_{\mathrm{test}}=5{,}628$}} \\
TVGP & 0.254 & 0.094 & 0.881 & 0.343 & 247.1 \\
SVecchia & 0.258 & 0.118 & 0.792 & 0.399 & 39.0 \\
JGP & 0.279 & 0.070 & 0.674 & 0.063 & 17.6 \\
laGP & 0.259 & 0.100 & 0.748 & 0.285 & 13.7 \\
\bottomrule
\end{tabular}
\vspace{3pt}

\begin{minipage}{\linewidth}
\footnotesize
\textit{Note:} RMSE, CRPS, and mean interval width are in meters; coverage is for 95\% latent intervals. Time is total wall-clock time for fitting and prediction. Entries average ten replications, except DGP VEC, which is fitted once, to the first replication with 10,000 training runs, because a single fit takes about 13 hours.
\end{minipage}
\end{table}

TVGP has the smallest RMSE only with the larger training set. With 10,000 training runs, SVecchia (0.296) and laGP (0.305) have smaller RMSE than TVGP (0.312). With 50,649 training runs, TVGP has the smallest RMSE (0.254), closely followed by SVecchia (0.258) and laGP (0.259). JGP has the smallest CRPS, but its intervals are the narrowest and cover only 0.722 of the test runs with 10,000 training runs and 0.674 with 50,649, so its predictive distributions are sharp but overconfident. Among the methods fitted ten times, TVGP has the highest coverage at both training sizes, with narrower intervals than SVecchia. DGP VEC, fitted once, has higher coverage (0.903) with 10,000 training runs. Apart from DGP VEC, however, TVGP is the slowest method at both training sizes. With the larger training set, TVGP gives higher coverage than SVecchia and laGP but only a small reduction in RMSE, at about six and 18 times their wall-clock times.

\begin{figure}[t]
\centering
\includegraphics[width=0.95\linewidth,keepaspectratio]{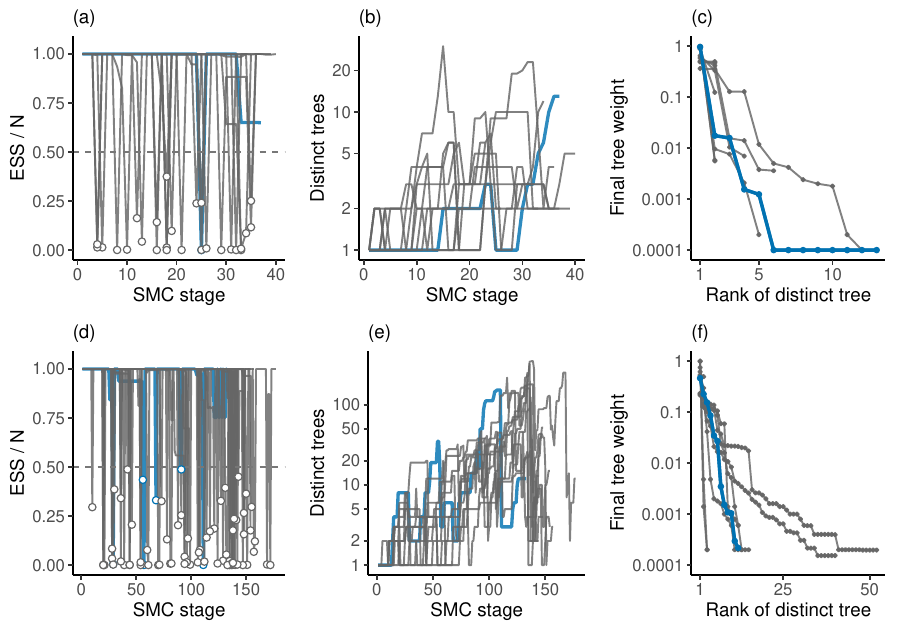}
\caption{Storm-surge SMC diagnostics for ten fits with 10,000 training runs, (a)--(c), and ten with 50,649, (d)--(f): (a), (d) effective sample size (ESS) before resampling divided by $N=5{,}000$; (b), (e) number of distinct trees after any resampling; (c), (f) final weights of the distinct trees in decreasing order. Open circles mark resampling, dashed lines mark the resampling threshold $\mathrm{ESS}/N=1/2$, and fit 1 is blue. Vertical axes in (b), (c), (e), and (f) are logarithmic; nine weights below $10^{-4}$ in (c) are drawn at $10^{-4}$. The final weight concentrates on a few trees.}
\label{fig:storm-current-smc}
\end{figure}

The intervals of TVGP undercover, and their coverage varies across the input space (\suppsecp{C.7}). At both training sizes, coverage is about 0.78 to 0.82 in the three westernmost landfall-longitude quintiles, compared with about 0.95 in the fourth and 1.00 in the easternmost. Coverage is also lower for positive responses than for zero responses. A second holdout scheme, with ten replications, holds out all landfall locations of one tenth of the storm design points, so that every test storm is new to the emulator. It shows the same regional pattern.

Leaves that contain both zero and positive training responses, which occur in all ten fits of the second scheme with about 50,650 training runs, account for part of the undercoverage. Positive-response test runs placed in such a leaf by most of 40 trees drawn from the final population have mean coverage 0.834, against 0.909 for the other positive responses. In a two-dimensional experiment with exact zeros over 30\% or 70\% of the domain, with ten datasets per scenario of 300 training and 1,000 test inputs from separate random Latin hypercube designs, TVGP has near-nominal overall coverage. Within 0.05 of the zero-response boundary, however, its coverage is 0.81 with 30\% zeros and 0.77 with 70\% zeros. The mixed leaves and this boundary experiment are consistent with incomplete separation of the zero-response region. Axis-aligned splits are restricted to a finite set of candidate cut points and a minimum leaf size, so some leaves contain responses from both sides of the zero-response boundary. One variance fitted to both kinds of responses can understate the variability of the positive ones. Positive responses outside such leaves also undercover, so the boundary explains only part of the shortfall.

Figure~\ref{fig:storm-current-smc} shows the SMC diagnostics for ten fits at each training size. Before many resampling events, marked by open circles in panels (a) and (d), $\mathrm{ESS}/N$ falls close to zero. One or a few trees then carry almost all of the weight, and resampling restores uniform weights by copying these trees. The number of distinct trees in panels (b) and (e) therefore drops at resampling, and equal weights after resampling do not indicate tree diversity. The final populations in panels (c) and (f) contain two to 13 distinct trees with $n=10{,}000$ and two to 52 with $n=50{,}649$. The largest tree weight exceeds 0.9 in four and two of the fits, respectively, so predictions average over few partitions.

\FloatBarrier
\section{Concluding Remarks} \label{sec: conclusion}
We have developed a Bayesian treed Vecchia Gaussian process model for nonstationary regression. TVGP combines recursive partitioning with regional Vecchia GP models, allowing the response surface and covariance structure to vary across input regions while retaining sparse leafwise likelihood calculations. An SMC algorithm with a locally optimal proposal learns the partition by using local marginal likelihoods to evaluate candidate splits, and posterior prediction averages over the final weighted tree population. Proper priors on the leaf covariance parameters regularize the repeated leafwise fits. Together, these components make Bayesian treed GP emulation feasible for nonstationary computer models with several inputs and tens of thousands of runs.

The intervals of TVGP undercover in the 10-dimensional example and the storm-surge application, and its final SMC weights often concentrate on a few trees. Hard partitions can introduce discontinuities across leaf boundaries, whereas many environmental response surfaces vary continuously over the input domain. Soft partitioning at the prediction stage \citep{Luo2024} and additive tree constructions \citep{Luo2022,Fox2012} offer two ways to smooth transitions between neighboring regions. More flexible partition geometries, including oblique hyperplane splits and minimum-spanning-tree partitions \citep{Luo2024}, could follow boundaries such as that of the zero-response region in Section~\ref{sec: storm surge} more closely. Local constructions from JGP models \citep{Park2022,Flowers2025} provide another direction for modeling behavior near partition boundaries. For small leaves with multimodal covariance posteriors, a sum of Laplace approximations centered at distinct local modes, or adaptive quadrature, could capture more of the posterior shape. The storm-surge simulator also produces outputs at many spatial locations. Extending TVGP with multivariate strategies \citep{Gu2016,Ma2022PPCokrig,Mccrimmon2026} would permit joint emulation across locations rather than separate fits at each site.

\bibliographystyle{apalike}
\bibliography{main}

\clearpage
\appendix
\setcounter{section}{0}
\setcounter{figure}{0}
\setcounter{table}{0}
\setcounter{equation}{0}
\setcounter{algorithm}{0}
\renewcommand{\thefigure}{S.\arabic{figure}}
\renewcommand{\thetable}{S.\arabic{table}}
\renewcommand{\theequation}{S.\arabic{equation}}
\renewcommand{\thealgorithm}{S.\arabic{algorithm}}
\def\theHfigure{S.\arabic{figure}}
\def\theHtable{S.\arabic{table}}
\def\theHequation{S.\arabic{equation}}
\def\theHalgorithm{S.\arabic{algorithm}}
\begin{center}
{\Large\bf Supplementary Material}
\end{center}
This Supplementary Material consists of five sections.
Section~\ref{sec: proof} contains the proofs of Propositions~3.1--3.5, shows that the leaf marginal likelihood is finite, and derives its Laplace approximation and the predictive distributions.
Section~\ref{sec: implementation} gives the settings of the treed Vecchia Gaussian process (TVGP) and of the competing methods in the simulation studies and the storm-surge application, and the computing resources.
Section~\ref{sec: additional results} provides additional numerical results, including diagnostics of the sequential Monte Carlo (SMC) population and comparisons of SMC with reversible-jump Markov chain Monte Carlo (RJ-MCMC).
Section~\ref{sec:complete-predictive} reports the complete predictive comparisons and additional calibration scores for the ablation study in Section~\ref{sec:ablation-main}.
Section~\ref{sec:smc_algorithm_complexity} presents the SMC algorithm and its computational complexity.

\section{Proofs} \label{sec: proof}
\subsection{Proof of Proposition 3.1}
\begin{proof}
Write $R_A=R_A(\theta_A)$, $Q_A=R_A^{-1}$, $P_0=V_0^{-1}$, $M_A=P_0+H^\top Q_AH$, $V_A=M_A^{-1}$, $h_A=P_0m_0+H^\top Q_A\bfy(A)$, and $\bar\gamma_A=V_Ah_A$. Under model~(4), in which $\gamma_A\mid\sigma_A^2\sim N_p(m_0,\sigma_A^2V_0)$ and $\sigma_A^2$ has the inverse-gamma distribution $\mathrm{IG}(a_0,b_0)$, completing the square in $\gamma_A$ gives
\begin{align}
&(\bfy(A)-H\gamma_A)^\top Q_A(\bfy(A)-H\gamma_A)+(\gamma_A-m_0)^\top P_0(\gamma_A-m_0)\notag\\
&\quad=(\gamma_A-\bar\gamma_A)^\top M_A(\gamma_A-\bar\gamma_A)+S_{0,A}^2,
\label{eq:nig_complete_square}
\end{align}
with $S_{0,A}^2=\bfy(A)^\top Q_A\bfy(A)+m_0^\top P_0m_0-h_A^\top V_Ah_A$ as in Section~3.1. Hence $\gamma_A\mid\sigma_A^2,\bfy(A),\theta_A\sim N_p(\bar\gamma_A,\sigma_A^2V_A)$, and integrating $\gamma_A$ against its normal prior gives
\begin{align*}
&\int p(\bfy(A)\mid\gamma_A,\sigma_A^2,\theta_A)\,p(\gamma_A\mid\sigma_A^2)\,d\gamma_A\\
&\quad=(2\pi\sigma_A^2)^{-n_A/2}|R_A|^{-1/2}|V_0|^{-1/2}|M_A|^{-1/2}\exp\!\left(-\frac{S_{0,A}^2}{2\sigma_A^2}\right).
\end{align*}
Multiplying by the $\mathrm{IG}(a_0,b_0)$ density of $\sigma_A^2$ shows that $\sigma_A^2\mid \bfy(A),\theta_A\sim\mathrm{IG}(a_A,b_A)$, with $a_A=a_0+n_A/2$ and $b_A=b_0+S_{0,A}^2/2$. The identity $\int_0^\infty x^{-a_A-1}\exp(-b_A/x)\,dx=\Gamma(a_A)b_A^{-a_A}$ then gives the integrated likelihood in Proposition~3.1.

\end{proof}

\paragraph{Finiteness of the leaf marginal likelihood.}
For fixed $\theta_A$, $q^I(A;\theta_A)$ of Proposition~3.1 is the marginal density of $\bfy(A)$ under the proper normal--inverse-gamma prior, so it integrates to one over $\mathbb R^{n_A}$. The same holds for $\widetilde q^I$, which uses the positive definite matrix $\widetilde R_A=B_A^{-1}D_AB_A^{-\top}$. Because the covariance prior is also proper, Tonelli's theorem gives $\int_{\mathbb R^{n_A}}\Phi(A)\,d\bfy(A)=1$ with either matrix, where $\Phi(A)=\int_{\Theta_A}q^I(A;\theta_A)\pi(\theta_A)\,d\theta_A$ as in Section~3.1. Hence $\Phi(A)<\infty$ for almost every $\bfy(A)$, and in particular with probability one when $\bfy(A)$ has a density.

Under the inverse-gamma covariance priors of Section~B.1, $\Phi(A)$ is finite for every $\bfy(A)$. Every observation contains an independent nugget component, so each Vecchia conditional variance satisfies $D_{A,i}\ge g_A$. Also, $M_A-P_0$ is positive semidefinite, giving $|V_0|^{-1/2}|M_A|^{-1/2}\le1$, and $b_A\ge b_0$. Proposition~3.1 therefore yields the uniform bound
\[
\widetilde q^I(A;\theta_A)
\le (2\pi)^{-n_A/2}g_A^{-n_A/2}
\frac{\Gamma(a_A)}{\Gamma(a_0)b_0^{n_A/2}}.
\]
For the full covariance matrix, $R_A-g_AI$ is positive semidefinite, so $|R_A|\ge g_A^{n_A}$ and the same bound holds for $q^I$. The bound does not depend on the ranges, and its integral against the inverse-gamma density of $g_A$ is finite, because this density decays faster than any power of $g_A$ as $g_A\to0$. Hence $\Phi(A)<\infty$ for every $\bfy(A)$ under these priors, including constant responses. When $g_A$ is bounded below by $10^{-6}$, as in Section~C.7, the same bound with $g_A$ replaced by $10^{-6}$ shows that $\Phi(A)$ is finite for every $\bfy(A)$ under the jointly robust prior renormalized on $g_A\ge10^{-6}$ (Section~B.1). Without the bound, $\Phi(A)$ can be infinite under the jointly robust prior. If $\bfy(A)=\mathbf 0$ and $m_0=0$, then $b_A=b_0$ for every $\theta_A$, and the factor $|R_A|^{-1/2}$ of $q^I$ grows faster than the prior density decays as the ranges grow and $g_A\to0$ together.

\paragraph{Laplace approximation.}
For computation, $R_A$ in Proposition~3.1 is replaced by $\widetilde R_A$, giving $\widetilde q^I$, and the covariance parameters are integrated in the coordinates $\psi_A=\log\theta_A\in\mathbb R^{d+1}$, taken componentwise, or with $\log(g_A-10^{-6})$ in place of $\log g_A$ when $g_A$ is bounded below (Section~B.1), with the integrand $\exp\{\ell_A(\psi_A)\}$ of Section~3.1, whose prior and Jacobian follow Section~B.1 when $g_A$ is bounded below. This integral has no closed form in general. At the local mode $\hat\psi_A$ inside the optimization bounds, selected as in Section~B.1, with $G_{\psi,A}=-\nabla_{\psi_A}^2\ell_A(\hat\psi_A)$ positive definite, integrating the second-order Taylor expansion of $\ell_A$ about $\hat\psi_A$ gives Equation~(5).

\subsection{Proof of Proposition 3.2}
\label{sc:proof_prop_2}
\begin{proof}
Drop the leaf index from the Vecchia quantities and from $\psi$, and condition on a fixed Vecchia ordering and fixed conditioning sets $N(i)$. Let $R(\psi)$ denote the ordered correlation matrix, including the nugget term. The Vecchia factors satisfy
\[
B_{ii}=1,\qquad
B_{i,N(i)}=-R_{i,N(i)}R_{N(i),N(i)}^{-1},\qquad
D_{ii}=R_{ii}+B_{i,N(i)}R_{N(i),i},
\]
with all other entries of $B$ equal to zero. Hence
\[
\widetilde Q=B^\top D^{-1}B,
\qquad
\widetilde R=B^{-1}DB^{-\top}.
\]
Because $B$ is unit lower triangular, $|\widetilde R|=|D|$.

Set $P_0=V_0^{-1}$,
$M_A=P_0+H^\top\widetilde QH$, $V_A=M_A^{-1}$, and
\[
\bar\gamma_A
=V_A(P_0m_0+H^\top\widetilde Q\bfy(A)),
\qquad
e_A=\bfy(A)-H\bar\gamma_A,
\qquad
\delta=\bar\gamma_A-m_0.
\]
Applying Proposition~3.1 with $R$ replaced by $\widetilde R$ gives
\begin{equation}
\log\widetilde q^I(\psi)
=C-\frac12\log|D|-\frac12\log|M_A|
-a_A\log b_A,
\label{eq:nig_vecchia_log_integrated}
\end{equation}
where $C$ does not depend on $\psi$,
$a_A=a_0+n_A/2$, and
\[
b_A=b_0+\frac12S_{0,A}^2,
\qquad
S_{0,A}^2
=e_A^\top\widetilde Qe_A+\delta^\top P_0\delta.
\]

\vspace{-\parskip}
The posterior mean $\bar\gamma_A$ satisfies
\[
M_A\bar\gamma_A=P_0m_0+H^\top\widetilde Q\bfy(A).
\]
Rearranging this equation yields the penalized normal equation
\begin{equation}
H^\top\widetilde Qe_A=P_0\delta.
\label{eq:nig_penalized_normal}
\end{equation}
For $k=1,\ldots,d_\theta$, where $d_\theta=d+1$ as in Section~3.1, a dot denotes differentiation with respect to $\psi_k$, as in $\dot{\widetilde Q}_k=\partial\widetilde Q/\partial\psi_k$, and differentiation of
$S_{0,A}^2$ gives
\begin{align}
\frac{\partial S_{0,A}^2}{\partial\psi_k}
&=
2e_A^\top\widetilde Q\frac{\partial e_A}{\partial\psi_k}
+e_A^\top\dot{\widetilde Q}_ke_A
+2\delta^\top P_0\frac{\partial\delta}{\partial\psi_k}\notag\\
&=
-2e_A^\top\widetilde QH
  \frac{\partial\bar\gamma_A}{\partial\psi_k}
+e_A^\top\dot{\widetilde Q}_ke_A
+2\delta^\top P_0
  \frac{\partial\bar\gamma_A}{\partial\psi_k}\notag\\
&=e_A^\top\dot{\widetilde Q}_ke_A,
\label{eq:nig_s0_derivative}
\end{align}
where the last equality follows from
Equation~\eqref{eq:nig_penalized_normal}. Therefore,
\[
\frac{\partial b_A}{\partial\psi_k}
=\frac12e_A^\top\dot{\widetilde Q}_ke_A.
\]
The remaining derivatives in
Equation~\eqref{eq:nig_vecchia_log_integrated} are
\[
\frac{\partial}{\partial\psi_k}\log|D|
=\operatorname{tr}(D^{-1}\dot D_k),
\qquad
\frac{\partial}{\partial\psi_k}\log|M_A|
=\operatorname{tr}(V_A\dot M_k),
\]
where $\dot M_k=H^\top\dot{\widetilde Q}_kH$.
The derivatives of the Vecchia factors follow from those of $R$. For each $i$, with $N=N(i)$ and the index $k$ omitted from $\dot B$, $\dot D$, and $\dot R$,
\[
\dot B_{i,N}=-(\dot R_{i,N}+B_{i,N}\dot R_{N,N})R_{N,N}^{-1},
\qquad
\dot D_{ii}=(1,B_{i,N})\,\dot R_{\{i\}\cup N}\,(1,B_{i,N})^\top,
\]
where $\dot R_{\{i\}\cup N}$ is the derivative of the correlation matrix of $\{i\}\cup N(i)$ with $i$ ordered first. Then
\[
\dot{\widetilde Q}_k=\dot B^\top D^{-1}B+B^\top D^{-1}\dot B-B^\top D^{-1}\dot DD^{-1}B.
\]
Substitution into Equation~\eqref{eq:nig_vecchia_log_integrated}
gives the gradient stated in Proposition~3.2. Differentiating $\dot B$, $\dot D$, and $\dot{\widetilde Q}_k$ once more with respect to $\psi_\ell$ gives $\ddot B_{k\ell}$, $\ddot D_{k\ell}$, and $\ddot{\widetilde Q}_{k\ell}$, which Section~A.3 uses.

\end{proof}

\subsection{Proof of Proposition 3.3}
\label{sc:proof_prop_3}
\begin{proof}
Retain the notation of Section~A.2 and write
\[
\widetilde f(\psi)=\log\widetilde q^I(\psi)
=C-\frac12\log|D|-\frac12\log|M_A|-a_A\log b_A,
\]
where $C$ does not depend on $\psi$. Differentiating
$M_A\bar\gamma_A=P_0m_0+H^\top\widetilde Q\bfy(A)$ with respect to $\psi_\ell$ gives
\[
\frac{\partial\bar\gamma_A}{\partial\psi_\ell}
=V_A H^\top\dot{\widetilde Q}_\ell e_A
=V_A z_\ell,
\qquad
\frac{\partial e_A}{\partial\psi_\ell}
=-HV_A z_\ell,
\]
where $z_k=H^\top\dot{\widetilde Q}_ke_A$ for $k=1,\ldots,d_\theta$.
Since Equation~\eqref{eq:nig_s0_derivative} gives
$s_k=\partial S_{0,A}^2/\partial\psi_k
=e_A^\top\dot{\widetilde Q}_ke_A$, a second differentiation yields
\begin{align*}
s_{k\ell}
&=\frac{\partial s_k}{\partial\psi_\ell}\\*
&=e_A^\top\ddot{\widetilde Q}_{k\ell}e_A
+2\left(\frac{\partial e_A}{\partial\psi_\ell}\right)^\top
\dot{\widetilde Q}_ke_A\\*
&=e_A^\top\ddot{\widetilde Q}_{k\ell}e_A
-2z_k^\top V_A z_\ell.
\end{align*}

\vspace{-\parskip}
The inverse-matrix derivatives are
\[
\frac{\partial D^{-1}}{\partial\psi_\ell}
=-D^{-1}\dot D_\ell D^{-1},
\qquad
\frac{\partial V_A}{\partial\psi_\ell}
=-V_A\dot M_\ell V_A.
\]
Therefore,
\begin{align}
\frac{\partial}{\partial\psi_\ell}
\left\{-\frac12\operatorname{tr}(D^{-1}\dot D_k)\right\}
&=\frac12\operatorname{tr}\!\left(
D^{-1}\dot D_\ell D^{-1}\dot D_k
-D^{-1}\ddot D_{k\ell}\right),
\label{eq:nig_hess_logD}\\
\frac{\partial}{\partial\psi_\ell}
\left\{-\frac12\operatorname{tr}(V_A\dot M_k)\right\}
&=\frac12\operatorname{tr}\!\left(
V_A\dot M_\ell V_A\dot M_k
-V_A\ddot M_{k\ell}\right).
\label{eq:nig_hess_logM}
\end{align}
Because
$\partial b_A/\partial\psi_\ell=s_\ell/2$,
\begin{align}
\frac{\partial}{\partial\psi_\ell}
\left(-\frac{a_A}{2b_A}s_k\right)
&=-a_A\left\{
\frac{s_{k\ell}}{2b_A}
-\frac{s_ks_\ell}{4b_A^2}
\right\}.
\label{eq:nig_hess_residual}
\end{align}
Adding Equations~\eqref{eq:nig_hess_logD}, \eqref{eq:nig_hess_logM},
and~\eqref{eq:nig_hess_residual}
gives the Hessian stated in Proposition~3.3.
\end{proof}

All derivatives in Propositions~3.2--3.3 are taken with respect to the
coordinates $\psi$ used in the covariance-parameter integral. In particular,
if $\theta_r=\exp(\psi_r)$, then
\begin{equation}
\frac{\partial R}{\partial\psi_r}
=\theta_r\frac{\partial R}{\partial\theta_r},
\qquad
\frac{\partial^2R}{\partial\psi_r\partial\psi_s}
=\theta_r\theta_s
\frac{\partial^2R}{\partial\theta_r\partial\theta_s}
+\boldsymbol{1}\{r=s\}\theta_r
\frac{\partial R}{\partial\theta_r}.
\label{eq:nig_log_coordinate_second_derivative}
\end{equation}
For $v=\log g$ and $R=R_0+gI$, where $R_0$ is the correlation matrix without the nugget, the chain rule
gives $\partial R/\partial v=\partial^2R/\partial v^2=gI$; mixed range--nugget derivatives are zero.

\subsection{Proof of Propositions 3.4 and 3.5}
\begin{proof}
Consider a particle with tree \(T_{t-1}\) and list of active leaves \(\mathcal U_{t-1}\), and let \(A\) be the first leaf in \(\mathcal U_{t-1}\). Before \(A\) is processed, it contributes \(\widehat\Phi(A)\) to the likelihood part of the target in Equation~(6) and no prior factor. If growth stops at \(A\), then \(A\) becomes a terminal leaf, the leaves of the tree are unchanged, and the target acquires the factor \(1-\rho^\star(A)\). Hence
\[
\frac{\widetilde\pi_t(T_t)}{\widetilde\pi_{t-1}(T_{t-1})}=1-\rho^\star(A).
\]
If \(A\) is split on axis \(j\) at \(c\), then \(A\) becomes an internal node, its children \(A_L(j,c)\) and \(A_R(j,c)\) replace it among the leaves and are appended to the list of active leaves, and the target acquires the split-prior factor \(\rho^\star(A)\lambda_j(A)F_{L,A,j}(\{c\})\), where \(F_{L,A,j}\) is the cut-point prior of Section~2.2. Therefore
\begin{align*}
\frac{\widetilde\pi_t(T_t)}{\widetilde\pi_{t-1}(T_{t-1})}
&=\rho^\star(A)\lambda_j(A)F_{L,A,j}(\{c\})
\frac{\widehat\Phi(A_L(j,c))\widehat\Phi(A_R(j,c))}{\widehat\Phi(A)}\\
&=\rho^\star(A)\lambda_j(A)F_{L,A,j}(\{c\})\mathrm{BF}_A(j,c).
\end{align*}
Write \(\eta_A(0)=1-\rho^\star(A)\) and \(\eta_A(j,c)=\rho^\star(A)\lambda_j(A)F_{L,A,j}(\{c\})\mathrm{BF}_A(j,c)\) for these ratios, and let \(\Delta_A=\{0\}\cup\{(j,c):j\in\mathcal J(A),\,c\in\mathcal C_j(A)\}\) be the set of admissible decisions. The locally optimal proposal draws a decision \(\delta\in\Delta_A\) with probability proportional to its target ratio,
\[
\mathbb Q_A(\delta)=\frac{\eta_A(\delta)}{\sum_{\delta'\in\Delta_A}\eta_A(\delta')}.
\]
By the definition of \(Z_j(A)\) in Section~3.2, the normalizing constant is
\[
\sum_{\delta'\in\Delta_A}\eta_A(\delta')
=1-\rho^\star(A)+\rho^\star(A)\sum_{j\in\mathcal J(A)}\lambda_j(A)Z_j(A)=\kappa_A.
\]
Summing \(\eta_A(j,c)/\kappa_A\) over all split decisions gives \(\mathbb Q_A\{S(A)=1\}=1-\{1-\rho^\star(A)\}/\kappa_A\). For a fixed axis \(j\), the sum over \(c\in\mathcal C_j(A)\) equals \(\rho^\star(A)\lambda_j(A)Z_j(A)/\kappa_A\); dividing this sum by \(\mathbb Q_A\{S(A)=1\}\) gives the axis probability, and dividing \(\eta_A(j,c)/\kappa_A\) by the sum gives the cut-point probability in Proposition~3.4.

For a sampled decision \(\delta_t\), the incremental importance weight is
\[
w_t=\frac{\widetilde\pi_t(T_t)}{\widetilde\pi_{t-1}(T_{t-1})\,\mathbb Q_A(\delta_t)}
=\frac{\eta_A(\delta_t)}{\eta_A(\delta_t)/\kappa_A}=\kappa_A,
\]
which proves Proposition~3.5. If \(\mathcal J(A)=\varnothing\), then \(\rho^\star(A)=0\), \(\Delta_A=\{0\}\), and \(\mathbb Q_A(0)=\kappa_A=1\). For a complete particle, the transition is the identity and the target is unchanged, so its incremental weight is one. Once a particle is complete, every internal node contributes its split-prior factor and every leaf contributes \(1-\rho^\star(A)\); their product is the Bayesian classification and regression tree (CART) prior of Section~2.2.
\end{proof}

\subsection{Derivation of the Predictive Distributions}
Fix leaf $A$ of a tree and its covariance parameters at the mode $\widehat\theta_A=\tau_A(\widehat\psi_A)$, with $\tau_A$ as in Section~3.1, and let $\delta_*=0$ for latent-function prediction and $\delta_*=1$ for noisy-response prediction. With $K_{AA}$, $K_{A*}=K_{*A}^\top$, and $K_{**}$ the correlation matrices among the observed and prediction locations of the leaf at $\widehat\theta_A$, the prediction rows of the Vecchia factors are formed from
\[
R_A^{\mathrm{ext}}(\delta_*)=
\begin{pmatrix}
K_{AA}+g_A I_{n_A} & K_{A*}\\
K_{*A} & K_{**}+\delta_*g_A I_{n_{*A}}
\end{pmatrix}.
\]
For a prediction row $i$ with predecessor set $N_i$, $B_{i,N_i}=-R^{\mathrm{ext}}_{i,N_i}(R^{\mathrm{ext}}_{N_i,N_i})^{-1}$ and $D_i=R^{\mathrm{ext}}_{ii}+B_{i,N_i}R^{\mathrm{ext}}_{N_i,i}$, whereas the fitted observation factors $B_A$ and $D_A$ are kept. Under an observations-first extended Vecchia ordering,
\begin{equation}
\begin{pmatrix}B_A&0\\ B_{*A}&B_*\end{pmatrix}
\begin{pmatrix}\bfy(A)-H\gamma_A\\ \bfy_{*A}-H_{*A}\gamma_A\end{pmatrix}
=\begin{pmatrix}\boldsymbol\varepsilon_A\\ \boldsymbol\varepsilon_*\end{pmatrix},
\qquad
\begin{pmatrix}\boldsymbol\varepsilon_A\\ \boldsymbol\varepsilon_*\end{pmatrix}
\sim\mathcal N\!\left[0,\sigma_A^2\begin{pmatrix}D_A&0\\0&D_*\end{pmatrix}\right],
\label{eq:extended_vecchia_prediction_system}
\end{equation}
so that $\bfy_{*A}=C_{*A}\bfy(A)+L_{*A}\gamma_A+B_*^{-1}\boldsymbol\varepsilon_*$ with $C_{*A}=-B_*^{-1}B_{*A}$ and $L_{*A}=H_{*A}-C_{*A}H$. Because $\boldsymbol\varepsilon_*$ is independent of $\boldsymbol\varepsilon_A$,
\[
\bfy_{*A}\mid\gamma_A,\sigma_A^2,\bfy(A),\widehat\theta_A
\sim\mathcal N\!\left(C_{*A}\bfy(A)+L_{*A}\gamma_A,\sigma_A^2\Omega_{*A}\right),
\qquad \Omega_{*A}=B_*^{-1}D_*B_*^{-\top}.
\]
By Section~A.1, with $\widetilde Q_A=B_A^\top D_A^{-1}B_A$ in place of $Q_A$,
\[
\gamma_A\mid\sigma_A^2,\bfy(A),\widehat\theta_A\sim\mathcal N(\bar\gamma_A,\sigma_A^2V_A),
\qquad
\sigma_A^2\mid\bfy(A),\widehat\theta_A\sim\mathrm{IG}(a_A,b_A).
\]
Integrating $\gamma_A$ gives $\bfy_{*A}\mid\sigma_A^2,\bfy(A),\widehat\theta_A\sim\mathcal N(\mu_{*A},\sigma_A^2\Lambda_{*A})$, with $\mu_{*A}=C_{*A}\bfy(A)+L_{*A}\bar\gamma_A$ and $\Lambda_{*A}=\Omega_{*A}+L_{*A}V_AL_{*A}^\top$, and integrating $\sigma_A^2$ gives
\[
\bfy_{*A}\mid\bfy(A),T,\widehat\theta_A
\sim t_{2a_A}\!\left(\mu_{*A},\frac{b_A}{a_A}\Lambda_{*A}\right),
\]
whose covariance under the scale-matrix parameterization is $b_A\Lambda_{*A}/(a_A-1)$ when $a_A>1$.

For pointwise prediction, each prediction location is appended separately, so $B_*=1$. For joint prediction, preceding prediction locations from the same leaf may enter the conditioning sets, and the joint distribution for tree $T_i$ factors over the leaves containing prediction locations,
\[
p_i(\bfy_*\mid\bfy)=\prod_{A\in\mathcal L(T_i):\,n_{*A}>0}t_{2a_A}\!\left(\bfy_{*A};\mu_{*A},\frac{b_A}{a_A}\Lambda_{*A}\right),
\]
with Bayesian model average $\widehat p_{\mathrm{joint}}(\bfy_*\mid\bfy)=\sum_{i=1}^NW_i\,p_i(\bfy_*\mid\bfy)$. A joint draw selects tree $T_i$ with probability $W_i$ and then draws one multivariate Student-$t$ vector for each leaf containing prediction locations. If $a_A>1$ in every contributing leaf and $p_i$ has mean $\mu_i$ and covariance $\Sigma_i$, the model-averaged mean is $\bar\mu_*=\sum_iW_i\mu_i$ and the covariance is $\sum_iW_i\Sigma_i+\sum_iW_i(\mu_i-\bar\mu_*)(\mu_i-\bar\mu_*)^\top$.

Within each tree, the noisy-response and latent predictive variances in leaf $A$ differ by $b_A\widehat g_A/(a_A-1)$ (Section~3.3). To see this, fix the same conditioning set of observed locations for latent and noisy-response prediction at a new point. In both cases the observed locations have covariance $K_{NN}+g_AI$ and cross-covariance $K_{*N}$ with the new point, so the regression coefficients, the predictive mean, and the term for uncertainty in the mean coefficients are identical. Changing the new-point diagonal from $1$ to $1+g_A$ increases $D_*$, and hence $\Lambda_{*A}$, by $g_A$. Multiplying by the posterior expectation $\mathbb E(\sigma_A^2\mid\bfy(A),\widehat\theta_A)=b_A/(a_A-1)$ gives the stated difference.

\section{Implementation Details} \label{sec: implementation}
In the simulation studies, unless stated otherwise, TVGP uses $N=5{,}000$ particles, conditioning sets of size $m=30$, Mat\'ern smoothness $\nu=2.5$, minimum leaf size 30, ten candidate cut points per input at empirical quantiles of the node's training inputs, tree-prior parameters $\alpha=0.95$ and $\beta=0.5$, normal--inverse-gamma hyperparameters $(m_0,V_0,a_0,b_0)=(0,100,2,1)$, and the jointly robust covariance prior with $a=0.2$ and $b=1$. In the storm-surge comparison, TVGP uses $N=5{,}000$ particles, conditioning sets of size $m=70$, $\nu=2.5$, minimum leaf size 70, the same tree-prior parameters, and the inverse-gamma covariance priors of Section~B.1, and each input is scaled using its training-set minimum and maximum, with the same transformation applied to the test inputs for every method. The CRPS and the 95\% intervals of TVGP are computed from 300 draws from its predictive distribution at each test input, 1,000 in the sampler comparison.

\subsection{TVGP Optimization and Priors}
\paragraph{Covariance-parameter optimization.}
We maximize $\ell_A(\psi_A)$ in the log coordinates $u_{A,j}=\log\phi_{A,j}$ and $v_A=\log g_A$ by the limited-memory Broyden--Fletcher--Goldfarb--Shanno method with bound constraints (L-BFGS-B) of \citet{Byrd1995}, followed by damped Newton iterations \citep[pp.~48--55]{NocedalWright2006}, with analytic gradient and Hessian. With $r_{A,j}$ the span of input $j$ in node $A$, the first three attempts start from $(\phi_{A,j}/r_{A,j},g_A)=(0.20,0.05)$, $(0.50,0.10)$, and $(0.08,0.01)$ and bound $\phi_{A,j}/r_{A,j}$ to $[10^{-3},10]$ and $g_A$ to $[10^{-8},10]$. Later attempts restart from the first two with wider bounds, in the widest case $[10^{-4},10^4]$ and $[10^{-14},100]$, which do not change the prior. The first attempt that converges with $\|\nabla\ell_A(\widehat\psi_A)\|_\infty\le10^{-3}$, lies more than $10^{-6}$ from every numerical bound, and has positive-definite $G_{\psi,A}$ is retained. If no attempt passes these checks, the run stops with an error rather than discarding the split. The starting values depend only on the node's data and fixed optimizer settings, except in the storm-surge fits. Within each leaf, the Vecchia ordering is the global training-input ordering restricted to the leaf.

\paragraph{Screened starting values.}
In the sampler comparison of Section~4.2, optimization first screens $d+3$ starting values, the three above and, for each input $k$, a short-range start with $\phi_{A,k}/r_{A,k}=0.03$, $\phi_{A,j}/r_{A,j}=0.30$ for $j\neq k$, and $g_A=0.01$, ranks them by $\ell_A$ at the starting value, and optimizes the three highest-ranked starts in that order, with range bounds from $10^{-3}$ to $10^3$ times the span. The first of these fits that passes the checks above is retained, and the attempts above follow if none passes. A short-range start in one input lets the optimizer reach modes in which that input carries high-frequency variation. In the 10-dimensional example, the default starts alone can lead to a mode with a long range in $x_5$ and a large nugget, which treats the oscillation in $x_5$ as noise.

\paragraph{Covariance prior.}
We use the jointly robust (JR) prior in Equation~(3.2) of \citet{Gu2019}. For \(\zeta_{A,j}=1/\phi_{A,j}\) and
\[
t_A=\sum_{j=1}^d C_{A,j}\zeta_{A,j}+g_A,
\qquad
C_{A,j}=n_A^{-1/d}\{\max_i x_{A,i,j}-\min_i x_{A,i,j}\},
\]
assume positive coordinate spans, so that $C_{A,j}>0$. The normalized prior is
\[
\pi_{\mathrm{JR}}(\boldsymbol\zeta_A,g_A)
=\frac{\Gamma(d+1)b^{a+d+1}\prod_{j=1}^d C_{A,j}}
{\Gamma(a+d+1)}\,t_A^a\exp(-bt_A),
\qquad a=0.2,\quad b=1,
\]
on $\zeta_{A,j}>0$ and $g_A>0$, where the constant is that of Lemma~1 of \citet{Gu2019} with $d$ inputs. It enters every leaf marginal likelihood, because $n_A$ and the spans in $C_{A,j}$ vary with the tree. In the coordinates $u_{A,j}=\log\phi_{A,j}$ and $v_A=\log g_A$, the Jacobian is $g_A\prod_j\zeta_{A,j}=\exp(v_A-\sum_ju_{A,j})$, so $\log\pi_{\mathrm{JR}}(\bfu_A,v_A)=\mathrm{const}+a\log t_A-bt_A-\sum_ju_{A,j}+v_A$.

In the sampler comparison of Section~4.2 and in Section~C.5, $g_A$ is bounded below by $10^{-6}$ for numerical stability. In Section~C.7, the same bound keeps $\Phi(A)$ finite for leaves whose training responses are all zero. The prior is then renormalized on $g_A\ge10^{-6}$, which replaces $\Gamma(a+d+1)/b^{a+d+1}=\int_0^\infty t^{a+d}\exp(-bt)\,dt$ in its normalizing constant by $\int_{10^{-6}}^\infty t^a(t-10^{-6})^d\exp(-bt)\,dt$. The nugget is optimized in $\log(g_A-10^{-6})$ instead of $\log g_A$, and $g_A-10^{-6}$ replaces $g_A$ in the Jacobian and in the nugget derivatives of Section~A.3. Section~C.5 also uses the bounds $10^{-8}$ and $10^{-4}$.

All simulation studies use the jointly robust prior. The storm-surge analysis uses independent inverse-gamma priors,
\[
\phi_{A,j}\sim\mathrm{IG}(2.5,1),
\qquad g_A\sim\mathrm{IG}(2,0.1),
\]
under which $\Phi(A)$ is finite for every response vector, including leaves whose training responses are all zero (Section~A.1). Both specifications integrate the leaf mean and process variance under the proper normal--inverse-gamma prior with $(m_0,V_0,a_0,b_0)=(0,100,2,1)$. For both storm-surge training sizes, each node retains at most 12 admissible split rules, allocated cyclically across admissible axes in input-column order with evenly spaced positions within each axis, and the axis and cut-point priors are normalized on the selected rules.
In these fits, child-node optimization starts from the fitted parent parameters, with bounds widened to contain them, and the gradient tolerance is $10^{-2}$. A leaf whose training responses are all equal is first started at long ranges and a small nugget, and then from the parent parameters. Pointwise predictions are used at both training sizes, because the covariance factorizations of joint predictions fail with 50,649 training runs. With 10,000 runs, joint and pointwise predictions give almost the same root mean squared error (RMSE) and continuous ranked probability score (CRPS).

\subsection{SPGP Initialization}\label{sc:spgp_init}
For the soft-partitioned Gaussian process (SPGP) of \citet{Luo2024}, we follow the coarse-to-fine initialization in the code of its authors. Local nearest-neighbor Gaussian process fits \citep{Datta2016} on a $6\times6$ pilot grid allocate reference knots, a fused-lasso step \citep{fusedlasso} on the knot graph gives an initial partition with $k_0$ regions, and each observation is softly assigned to its nearest knots with inverse-distance probabilities. We set $k_0=4$ in the two-dimensional example, matching its four regions, and $k_0=1$ in the smooth example of Section~C.3. Each region has an isotropic Mat\'ern range and variance ratio, re-estimated whenever the partition changes, and the partition is sampled by Markov chain Monte Carlo (MCMC) for 7,000 iterations, of which the first 2,000 are discarded.

\subsection{DGP VEC, TGP, and SVecchia Implementation}
For the two-layer Vecchia-approximated deep Gaussian process (DGP VEC), we use the \texttt{deepgp} package \citep{SauerDeepGP} with a Mat\'ern covariance and an inferred nugget. In the two-dimensional examples, \texttt{fit\_two\_layer} runs 6,000 iterations with conditioning sets of size 30, and the 6,000 continuation draws of \texttt{continue(..., 6000, re\_approx=TRUE)} are retained; the 10-dimensional fits use the same scheme with \texttt{D=10} and \texttt{pmx=TRUE}. The storm-surge fit with $n=10{,}000$ uses $m=70$ and \texttt{nmcmc=2000}, discards the first 1,000 iterations, retains every second remaining iteration, and fixes the outer nugget at $10^{-6}$. The response is standardized for fitting, and predictions are transformed back. The outer nugget contribution is removed from the predictive variance so that intervals target the latent response.

We fit TGP, the treed Gaussian process of \citet{Gramacy2012}, with the function \texttt{btgp} of the \texttt{tgp} package \citep[version 2.4-23;][]{Gramacy2007TGP}, using the package's Mat\'ern correlation with $\nu=2.5$, the smoothness used by TVGP. The two-dimensional fits use a constant mean and tree-prior parameters $(\alpha,\beta)=(0.95,0.5)$, and the 10-dimensional fits use the package defaults, a linear mean and $(\alpha,\beta)=(0.5,2)$. The predictive TGP fits use 7,000 MCMC iterations, of which the first 2,000 are discarded, and the two-dimensional fits retain every 20th iteration and the 10-dimensional fits every iteration.

For scaled Vecchia (SVecchia), we use the \texttt{fit\_scaled} and \texttt{predictions\_scaled} functions of \citet{KatzfussScaledVecchia}. The variance, nugget, and one range per input are estimated by Fisher scoring on the Vecchia likelihood, and the inputs are rescaled by the estimated ranges before ordering and selecting the conditioning sets. The conditioning sets for fitting, variance correction, and prediction all have size 200, with $\nu=2.5$ and \texttt{trend="pre"}, and the simulations also use \texttt{scale="parms"} and an estimated variance correction factor. Parameter estimation uses 5,000 observations in the two-dimensional example and in the storm-surge comparison with 10,000 runs, 2,000 in the 10-dimensional example, and 10,000 in the storm-surge comparison with 50,649 runs.

\subsection{JGP Implementation}
For the jump Gaussian process (JGP), we use the public JumpGaussianProcess implementation associated with \citet{Park2022} and \citet{Flowers2025}. Each prediction fits a local Gaussian process (GP) with a squared-exponential kernel and one lengthscale per input by classification expectation-maximization, with a quadratic boundary for two inputs and a linear boundary otherwise, on the 25 nearest training observations in the simulations. In the storm-surge comparison, 200 neighbors are selected from 25, 50, 100, 200, and 285 by CRPS, and then RMSE, on a 250-run validation subset of the 10,000-run training set of the first replication, and the response is standardized before fitting.
Intervals and CRPS target the latent function and use the generalized least-squares mean with the universal-kriging variance
\[
v_{\mathrm{JGP}}(\bfx_\ast)
=c_{\ast\ast}-k_\ast^\top K_y^{-1}k_\ast
+\frac{\left(1-\boldsymbol{1}^\top K_y^{-1}k_\ast\right)^2}
{\boldsymbol{1}^\top K_y^{-1}\boldsymbol{1}},
\]
where $K_y$ is the fitted local covariance matrix including the nugget, $k_\ast$ the covariance vector between the prediction point and its local training observations, and $c_{\ast\ast}$ the fitted latent variance at the prediction point.

\subsection{laGP Implementation}
For the local approximate Gaussian process (laGP) of \citet{laGP} in the storm-surge analysis, \texttt{laGP::laGP} is called separately at each test input, and \texttt{predGP(..., nonug=TRUE)} gives the latent predictive distribution. The response is standardized, and the inputs are divided by the square roots of separable lengthscales estimated with a nugget by maximum likelihood from a global GP fitted to 1,000 randomly selected training runs. Each local GP estimates its lengthscale and nugget by maximum likelihood on the 100 nearest training runs, a design selected over active-learning-Cohn designs \citep{Gramacy2015} grown from six to 50 or 100 points by CRPS, and then RMSE, on a 300-run validation subset of the 10,000-run training set of the first replication.

\subsection{Computing Resources}

Computing times are wall-clock times on a Linux computing cluster. During fitting, SMC evaluates the candidate splits of a node in parallel, while the particles are updated sequentially. In the storm-surge fits, the candidate splits are instead evaluated in one process, which uses 16 threads for the Vecchia computations and the conditioning-set searches. RJ-MCMC and TGP chains are sequential. An RJ-MCMC chain fits leaves in parallel only when a proposal creates two new leaves, and it reads most proposed leaves from a cache, so it runs mostly on one core. In Section~4.2, each SMC run, RJ-MCMC chain, and TGP chain is allocated 16 cores, and each fit in Section~4.3 is allocated four cores. In Section~4.4, TVGP is allocated four cores in the two-dimensional example and 16 in the 10-dimensional example, TGP 16 cores in both examples, DGP VEC 16 cores in the two-dimensional example and four in the 10-dimensional example, and SVecchia and JGP four cores in both examples. The TVGP fits of Sections~C.3 and~C.4 are allocated four cores. Each storm-surge TVGP fit is allocated 16 cores, and the SVecchia, laGP, and JGP fits run within one 16-core job per method.

\section{Additional Results} \label{sec: additional results}

\subsection{SMC Population Diagnostics}
\label{sec:sim1-smc-population-diagnostics}

We use the ten TVGP fits of the two-dimensional comparison in Section~4.4, one per dataset, each with $N=5{,}000$ particles started from the root-only tree. Two particles represent different trees if any split coordinate, cut value, child ordering, or terminal region differs, regardless of the covariance parameters. The fits terminate after 16--22 stages with zero to four resampling events (Figure~\ref{fig:supp-sim1-smc-diagnostics}(a)), and their final populations contain one to 18 distinct trees (panel (c)), with a median of eight. In each fit, the five highest-weight trees carry between 0.960 and 1.000 of the weight. The ten fits of the 10-dimensional comparison retain more trees, terminating after 21--25 stages with 28--189 distinct trees and final importance-weight effective sample size (ESS) fractions of 0.568--0.972.

\begin{figure}[!htbp]
    \centering
    \includegraphics[width=\linewidth,keepaspectratio]{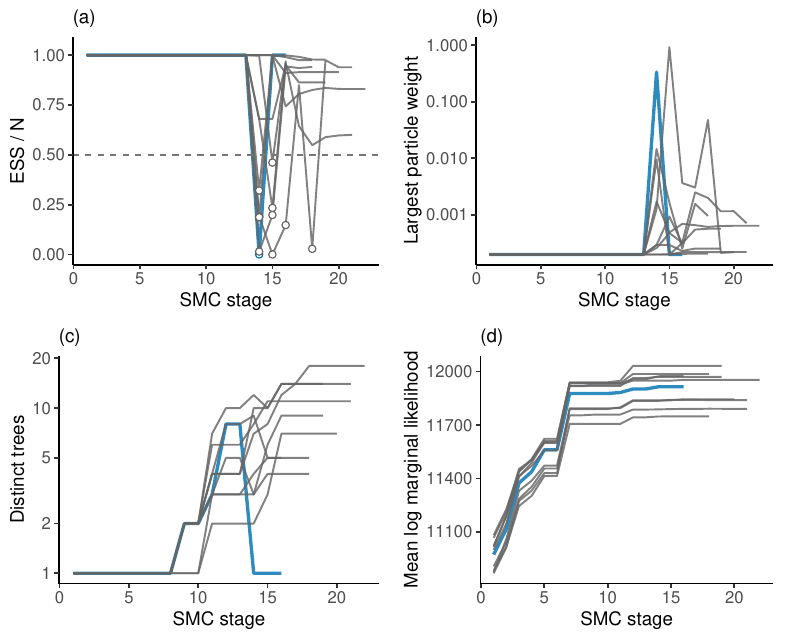}
    \caption{Stagewise diagnostics for the ten TVGP fits of the two-dimensional comparison, with $N=5{,}000$: (a) pre-resampling effective sample size (ESS) of the importance weights divided by $N$, (b) largest normalized particle weight before resampling, (c) number of distinct trees after any resampling on a logarithmic scale, and (d) particle-weighted mean log marginal likelihood before resampling. The fit to dataset 1 is blue, and the other nine are gray. Open circles in (a) mark resampling, and the dashed horizontal line marks $\mathrm{ESS}/N=1/2$. Resampling is rare, and the final populations contain few distinct trees.}
    \label{fig:supp-sim1-smc-diagnostics}
\end{figure}

For the two-dimensional populations, we also summarize the sizes of the terminal leaves, weighting each leaf by the weight of its tree and pooling the ten datasets. Leaf sizes range from 31 to 4,050 training observations, with a weighted median of 600, and about 65\% of the weighted leaves have more than 100 observations. Only about 1\% of the weighted leaves have at most $m+1=31$ observations, for which the Vecchia likelihood with conditioning sets of size $m=30$ is exact.

\subsection{Sensitivity to the Number of Particles}\label{sc:particle_sensitivity}

TVGP is run ten times with different seeds on the first dataset of the two-dimensional example of Section~4.4 at each of $N=30$, 50, 100, 250, 500, 1,000, 2,000, and 5,000, with the other TVGP settings of Section~B. Mean RMSE decreases from 0.01799 at $N=30$ to 0.01760 at $N=5{,}000$ and mean CRPS from 0.00882 to 0.00874, coverage of 95\% latent intervals stays between 0.9468 and 0.9479, and the mean scores change little above $N=100$. Mean computing time ranges from 8.8 to 11.3 minutes and does not increase monotonically with $N$, because candidate nodes are evaluated once and shared across particles.

\subsection{A Smooth Two-Dimensional Example}\label{sec:synth-2d}

The smooth function $f(x_1,x_2)=x_1\exp(-x_1^2-x_2^2)$ on $[-2,5]^2$ \citep{Gramacy2004}, observed with independent $\mathcal N(0,0.04^2)$ noise, has a localized peak and trough and no abrupt changes, unlike the two-dimensional example. Each of ten datasets has 6,000 training and 1,000 test inputs from two separately generated improved Latin hypercube designs, every method receives the same data within a dataset, and TVGP uses the settings of Section~B. TVGP has the second-largest mean RMSE of the six methods, and TGP has the smallest RMSE and CRPS but coverage 0.918 (Table~\ref{tab:additional-examples}).

\subsection{Oblique-Transition Version of the 10-Dimensional Example}\label{sec:ns10-rotated}

The 10-dimensional example has axis-aligned transitions in $x_1$. In this version, these two transitions occur in $(x_1+x_2)/2$, through $s_1=\operatorname{expit}[60\{(x_1+x_2)/2-0.30\}]$ and $s_2=\operatorname{expit}[60\{(x_1+x_2)/2-0.65\}]$, so they are oblique to the coordinate axes, while the remaining terms of the data-generating mechanism, the designs, the sample sizes, and the number of datasets follow Section~4.1, and the TVGP settings follow Section~B. TVGP keeps the lowest mean RMSE and CRPS under oblique transitions (Table~\ref{tab:additional-examples}), but its coverage is 0.906, whereas TGP, with intervals about twice as wide, has coverage closest to the nominal level. TVGP takes 2.8 hours on average, less than TGP and DGP VEC but more than SVecchia and JGP. The coverage of SVecchia and JGP is below 0.42.

\begin{table}[!htbp]
\centering
\caption{Mean scores in the smooth two-dimensional example of Section~C.3 and the oblique-transition 10-dimensional example of Section~C.4: RMSE, CRPS, coverage and mean width of 95\% latent intervals, and time for fitting and prediction. Means are over ten datasets, except for TGP and SPGP (eight) and DGP VEC (five) in the smooth example, for which the remaining fits are unavailable.}
\label{tab:additional-examples}
\vspace{2pt}
\begin{tabular}{lccccr}
\toprule
Method & RMSE & CRPS & Coverage & Width & Time\\
\midrule
\multicolumn{6}{l}{\emph{Smooth two-dimensional example}}\\
TVGP & 0.00837 & 0.00469 & 0.940 & 0.0318 & 25.9 min\\
TGP & 0.00527 & 0.00251 & 0.918 & 0.0169 & 11.7 h\\
SPGP & 0.00587 & 0.00405 & 0.992 & 0.0494 & 16.7 h\\
DGP VEC & 0.00675 & 0.00392 & 0.984 & 0.0368 & 7.5 h\\
SVecchia & 0.00648 & 0.00366 & 0.955 & 0.0265 & 7.7 min\\
JGP & 0.0109 & 0.00593 & 0.857 & 0.0312 & 0.8 min\\
\midrule
\multicolumn{6}{l}{\emph{Oblique-transition 10-dimensional example}}\\
TVGP & 0.468 & 0.239 & 0.906 & 1.39 & 2.8 h\\
TGP & 0.734 & 0.419 & 0.947 & 2.86 & 4.4 h\\
DGP VEC & 0.747 & 0.441 & 0.758 & 1.73 & 18.7 h\\
SVecchia & 0.655 & 0.456 & 0.419 & 0.740 & 33.8 min\\
JGP & 0.930 & 0.643 & 0.293 & 0.731 & 1.3 min\\
\bottomrule
\end{tabular}
\end{table}

\subsection{Sensitivity to the Jointly Robust Prior}\label{sec:jr-sensitivity}

The ten datasets of the two-dimensional example of Section~4.4 are refitted with the hyperparameters $a$ and $b$ of the JR prior and the lower bound on $g_A$ of Section~B.1 changed one at a time, from $a=0.2$, $b=1$, and $g_A\ge10^{-6}$ to $a\in\{0.05,0.5\}$, $b\in\{0.5,2\}$, or the lower bounds $10^{-8}$ and $10^{-4}$, with the other TVGP settings of Section~B. Predictive performance changes little across these seven settings. Mean RMSE ranges from 0.01934 to 0.01945, mean CRPS from 0.008821 to 0.008849, coverage from 0.9452 to 0.9466, and mean interval width from 0.06021 to 0.06039.

\subsection{Sampler Diagnostics}\label{sec:ns10-smc-rjmcmc}

The RJ-MCMC update used in the sampler comparison of Section~4.2 is given in Algorithm~\ref{alg:tvgp_rjmcmc_main}. Here $\mathcal G$ is the set of trees that satisfy the minimum leaf size and the depth limit, and $\widehat\Phi(A)$ is the leaf marginal likelihood evaluated by the deterministic Laplace approximation shared with SMC. Following \citet{Gramacy2012}, grow, prune, change, and swap moves are proposed with probabilities 0.20, 0.20, 0.40, and 0.20; a swap of a parent and child that split the same input uses a rotation, a swap of splits on different inputs exchanges the two rules, and proposals outside $\mathcal G$ produce self-transitions.

\begin{algorithm}[!htbp]
\caption{RJ-MCMC for the TVGP tree posterior}
\label{alg:tvgp_rjmcmc_main}
\begin{algorithmic}[1]
\Require Tree space $\mathcal G$, tree prior $\pi(T\mid X)$, leaf marginal likelihoods $\widehat\Phi(A)$, fixed move probabilities, and number of iterations $B$.
\State Define $\widetilde\ell(T)=\log\pi(T\mid X)+\sum_{A\in\mathcal L(T)}\log\widehat\Phi(A)$ and initialize $T^{(0)}\in\mathcal G$.
\For{$b=1,\ldots,B$}
    \State Draw a move type and propose $T^\star\sim q(\cdot\mid T^{(b-1)})$, using a rotation to swap a parent and child that split the same input.
    \If{$T^\star\notin\mathcal G$}
        \State Set $T^{(b)}=T^{(b-1)}$ and continue.
    \EndIf
    \State Evaluate $\widehat\Phi(A)$ for each leaf of $T^\star$ not evaluated before, and set
    \[
    \alpha=\min\left\{1,
    \exp\!\left[\widetilde\ell(T^\star)-\widetilde\ell(T^{(b-1)})\right]
    \frac{q(T^{(b-1)}\mid T^\star)}{q(T^\star\mid T^{(b-1)})}
    \right\}.
    \]
    \State Set $T^{(b)}=T^\star$ with probability $\alpha$; otherwise set $T^{(b)}=T^{(b-1)}$.
\EndFor
\end{algorithmic}
\end{algorithm}

Each sampler is run ten times on one dataset per example, with the TVGP settings of Section~B, maximum depth three, the screened starting values of Section~B.1, and candidate cut points that every node takes from one set of ten quantiles per input of all training inputs; SMC uses $N=5{,}000$ particles, and each RJ-MCMC chain starts from the tree with a single leaf and runs 50,000 iterations, of which the first 20,000 are discarded. After burn-in, nine of the ten two-dimensional RJ-MCMC chains make 328--640 partition transitions and the tenth makes two. Two of the chains remain at levels of the sum of leaf log marginal likelihoods about 14 and 46 below the level shared by the other eight chains and by all ten SMC runs, and the rank-normalized $\widehat R$ \citep{Vehtari2021} is 1.89 for this sum and 1.25 for the number of leaves. In the 10-dimensional example, four chains remain at one partition throughout the 30,000 post-burn-in iterations and the other six make two to 80 transitions (Figure~\ref{fig:current-rjmcmc-traces-10d}(b)). The final SMC populations contain two distinct trees in each two-dimensional run and two to four in each 10-dimensional run.

\begin{figure}[!htbp]
\centering
\includegraphics[width=\linewidth,keepaspectratio]{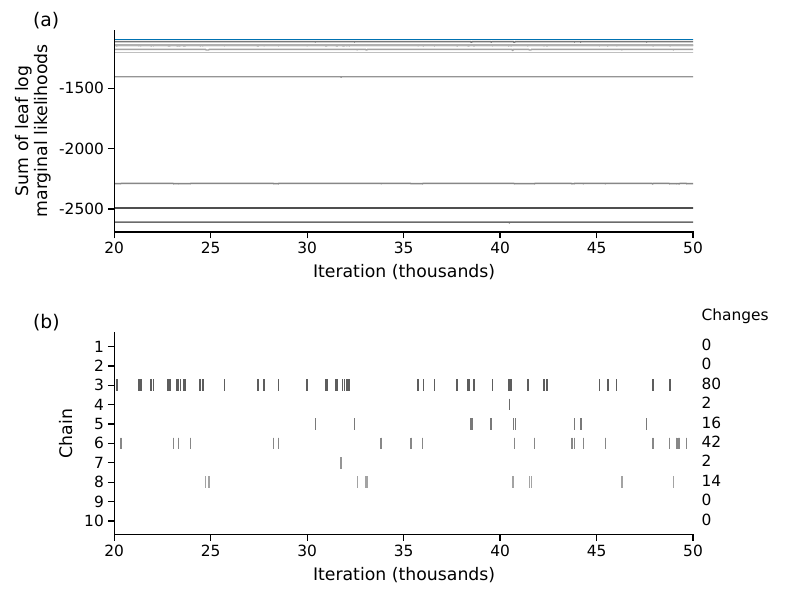}
\caption{RJ-MCMC chains after burn-in in the 10-dimensional example: (a) the sum of leaf log marginal likelihoods on a shared vertical scale and (b) training-partition transitions, with counts at right. Iterations 1--20,000 are discarded, chain 1 is blue. Four chains have no partition transitions.}
\label{fig:current-rjmcmc-traces-10d}
\end{figure}

In the 10-dimensional example, the mean final CRPS is 0.188 for SMC and 0.195 for the post-burn-in RJ-MCMC draws, and Section~4.2 gives the times and RMSE. Four of the cumulative RJ-MCMC estimates fall below the mean final SMC RMSE within 43 minutes, before the first SMC run finishes at 61.5 minutes. The ten SMC runs end with the same highest-weight tree, for which $\widetilde\ell(T)$ in Algorithm~\ref{alg:tvgp_rjmcmc_main} is $-1232.2$, whereas the final RJ-MCMC trees have $\widetilde\ell(T)$ between $-2641$ and $-1130$. The five chains that end with lower RMSE than SMC reach trees with larger $\widetilde\ell(T)$ than the SMC tree. In the sum of leaf log marginal likelihoods plotted in Figure~\ref{fig:current-rjmcmc-traces-10d}(a), which omits the log tree prior, these five chains end highest, a sixth ends slightly below them, and the other four end much lower.

Figure~\ref{fig:official-tgp-traces} shows ten TGP chains for each of the two datasets, each running 50,000 iterations of \texttt{tgp::btgp} with the first 20,000 discarded, a constant mean, $(\alpha,\beta)=(0.95,0.5)$, and at least 30 observations per leaf; these diagnostic chains are separate from the predictive TGP fits of Section~4.4 (Section~B.3). TGP updates the continuous GP parameters as well as the tree, and panels (a)--(b) show the package-reported log posterior and panels (c)--(d) the number of leaves. The rank-normalized $\widehat R$ of the log posterior is 2.43 in the two-dimensional example and 3.37 in the 10-dimensional example, with bulk ESS 12.1 and 11.0 and tail ESS 33.2 and 12.8, and $\widehat R$ for the leaf count is 6.08 and 3.21, so the chains remain at different tree sizes and posterior levels.

\begin{figure}[!htbp]
\centering
\includegraphics[width=\linewidth,keepaspectratio]{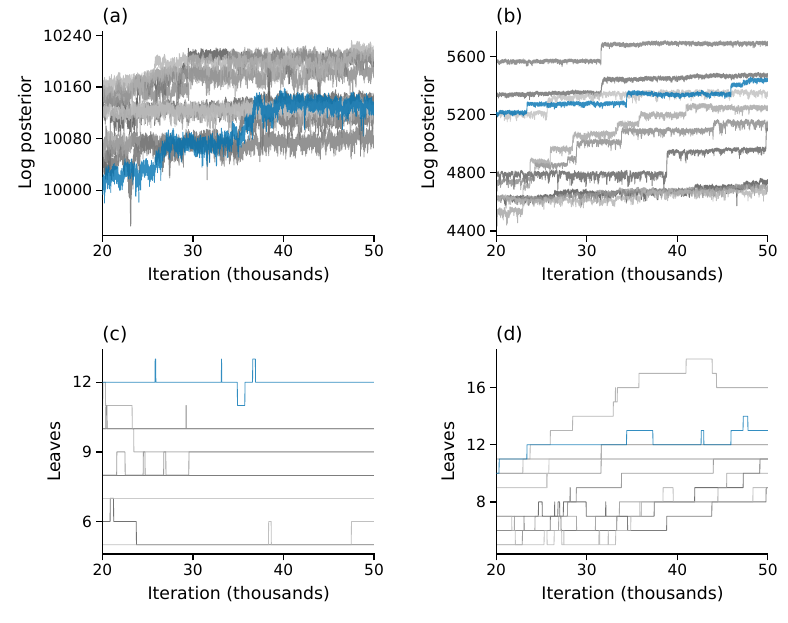}
\caption{TGP chains for the two-dimensional dataset (left) and the 10-dimensional dataset (right), with ten chains per dataset: (a)--(b) package-reported log posterior and (c)--(d) number of leaves. All panels show iterations 20,001--50,000. Within each panel, chains share a vertical scale; chain 1 is blue and the other nine are gray. The chains remain at different posterior levels and tree sizes.}
\label{fig:official-tgp-traces}
\label{fig:official-tgp-traces-10d}
\end{figure}

\subsection{Storm-Surge SMC and Calibration Diagnostics}
\label{sec:storm-diagnostics}

At each training size, TVGP is fitted once to each of the ten replications of Section~5.2, with the TVGP settings of Section~B. The fits with 10,000 training runs terminate after 23--40 stages and those with 50,649 runs after 133--177 stages, and Figure~6 shows their weight ESS ((a), (d)), distinct trees ((b), (e)), and final tree weights ((c), (f)). The largest final tree weight is 0.363--0.994 for the smaller fits and 0.216--0.998 for the larger fits, and in each fit where it is below 0.9, at least two trees have weight above 0.1.

Figure~\ref{fig:storm-current-regional-coverage} stratifies test runs by pressure-deficit and landfall-longitude quintiles of the pooled test inputs at each training size. Coverage varies more strongly with longitude (panels (b) and (d)). In the first three longitude quintiles, mean coverage ranges from 0.776 to 0.801 with 10,000 training runs and from 0.812 to 0.824 with 50,649, compared with about 0.95 and 1.00 in the last two quintiles. Across pressure-deficit quintiles (panels (a) and (c)), mean coverage ranges from 0.823 to 0.904 with 10,000 training runs and from 0.845 to 0.912 with 50,649. Pooled over replications, coverage is 0.889 for zero responses and 0.822 for positive responses with 10,000 training runs, and 0.908 and 0.838 with 50,649. The undercoverage therefore occurs in both response groups and persists with more training data.

\begin{figure}[!htbp]
\centering
\includegraphics[width=\linewidth,keepaspectratio]{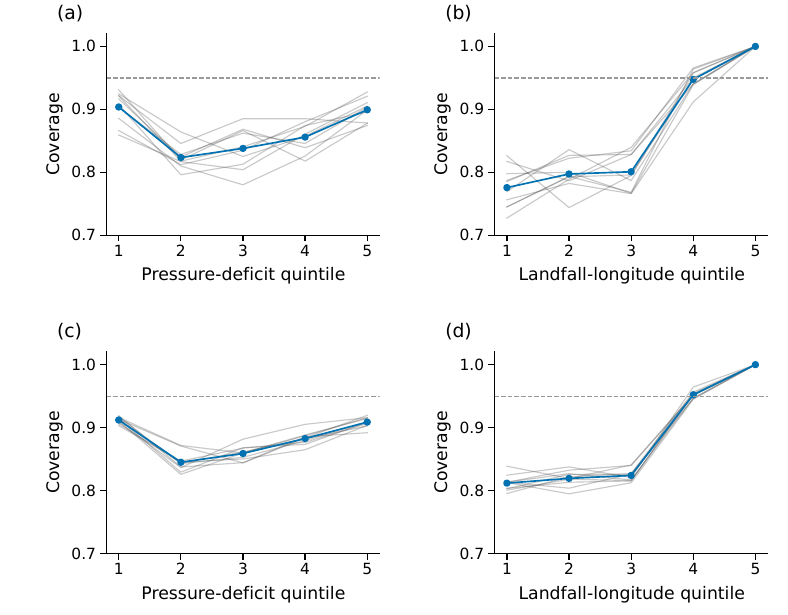}
\caption{TVGP coverage by pressure-deficit quintile, (a) and (c), and landfall-longitude quintile, (b) and (d): (a)--(b) ten replications with 10,000 training and 1,000 test runs and (c)--(d) ten replications with 50,649 training and 5,628 test runs. Gray lines show individual replications and blue lines their means. Quintiles are numbered in increasing order of the input, so longitude quintile 1 is the westernmost. Dashed lines mark 0.95 coverage. Coverage is lowest in the first three longitude quintiles.}
\label{fig:storm-current-regional-coverage}
\end{figure}

A separate grouped-design holdout assigns all landfall realizations of each five-dimensional storm design, given by the first five inputs in Table~2, to either the training or the test set, so that every test storm is new to the emulator. In each of ten replications, the held-out designs are chosen by a greedy maximin rule from a random starting design, and designs containing the two smallest or two largest values of an input stay in the training set. With about 10,000 training runs, 69 of 690 selected designs are held out, and with the full data, 350 of the 3,500 designs are held out, leaving about 50,650 training runs. With about 10,000 and about 50,650 training runs, mean coverage is 0.908 and 0.895 overall, 0.929 and 0.918 for exact-zero responses, and 0.866 and 0.852 for positive responses. With the full data, mean coverage in the five landfall-longitude quintiles is 0.837, 0.840, 0.841, 0.957, and 1.000, the same regional pattern as in the random holdout.

In all ten full-data fits of this holdout, some leaves contain both zero and positive training responses; we call them mixed leaves. For each test input, we draw 40 trees with replacement from the final weighted population and record the fraction of these trees that place the input in a mixed leaf. On average, 49.0\% of the test runs fall in a mixed leaf for more than half of the drawn trees, and their mean coverage is 0.802, compared with 0.985 for the other test runs. Among positive-response test runs, the corresponding coverages are 0.834 and 0.909, so the undercoverage is largest in leaves that straddle the boundary of the zero-response region.

\subsubsection{Exact-Zero Response Regions}

About 62\% of the storm-surge runs have zero peak surge elevation, and the mixed leaves above lie on the boundary of this zero-response region. To examine interval calibration near such a boundary in a controlled setting, we generate inputs in $[0,1]^2$ and set
\[
f(x_1,x_2)=\begin{cases}
0,&x_1\le q,\\
0.4+0.3\sin^2(2\pi x_2)+0.3(x_1-q)/(1-q),&x_1>q.
\end{cases}
\]
The two exact-zero scenarios use $q=0.3$ and $q=0.7$, with independent Gaussian noise of standard deviation 0.03 only in the active region. A third scenario uses $q=0.7$ and adds the same noise throughout the domain. Each scenario has ten datasets, each with 300 training inputs and 1,000 test inputs from two separately generated random Latin hypercube designs, so the datasets differ in their designs and noise. TVGP and TGP receive the same data, and predictions are evaluated against $f$. TVGP bounds $g_A$ below by $10^{-6}$ in all three scenarios (Section~B.1), which keeps $\Phi(A)$ finite for leaves whose training responses are all zero.

\begin{figure}[!htbp]
\centering
\includegraphics[width=\linewidth,keepaspectratio]{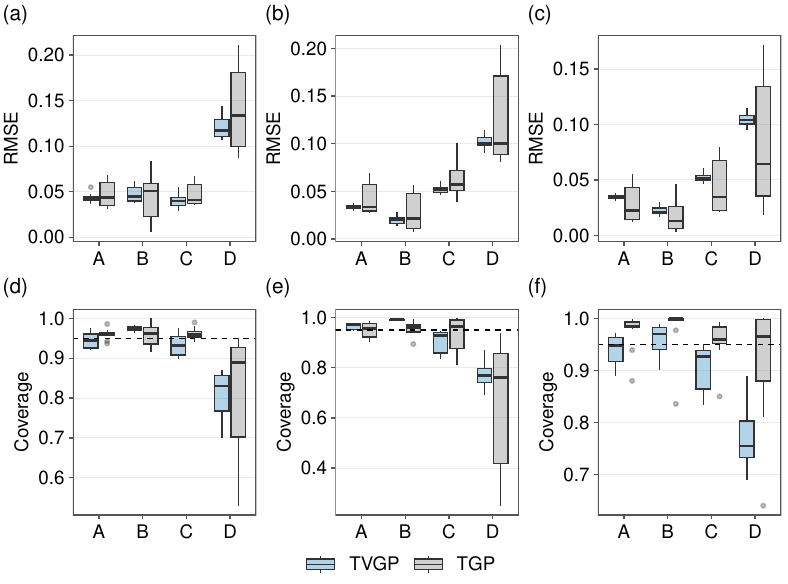}
\caption{Controlled inactive-region experiment. Columns correspond to 30\% exact zeros, 70\% exact zeros, and a 70\% inactive region with noise throughout the domain. Boxplots summarize ten paired datasets per scenario, each with 300 training and 1,000 test inputs; the top row, panels (a)--(c), gives RMSE and the bottom row, panels (d)--(f), coverage of 95\% latent intervals. Groups are A, all inputs; B, inactive region; C, active region; D, boundary strip $|x_1-q|\le0.05$. TVGP is blue and TGP gray; coverage axes differ across columns, and dashed lines mark 0.95. TVGP uses the jointly robust prior, $N=5{,}000$, and conditioning sets of size 30. TVGP coverage falls well below 0.95 in the boundary strip in all three scenarios.}
\label{fig:exact-zero-comparison}
\end{figure}

Mean overall TVGP coverage is 0.945 and 0.964 in the two exact-zero scenarios (Figure~\ref{fig:exact-zero-comparison}(d) and (e)), and the exact-zero regions themselves have high coverage for both methods. In the boundary strip $|x_1-q|\le0.05$, however, mean TVGP coverage is 0.810 and 0.768, compared with 0.807 and 0.644 for TGP. With noise throughout the domain (panel (f)), boundary coverage is 0.766 for TVGP and 0.917 for TGP. Good average coverage can therefore conceal substantial undercoverage near a discontinuity. For TVGP, this undercoverage arises from the discontinuity rather than from the exact zeros, because added noise leaves its boundary coverage nearly unchanged.

\subsection{Residual Semivariograms of the Storm-Surge Response}
\label{sc:additional_pairwise_heatmap}

Because the landfall locations lie along a curved coastline, we fit a principal curve through the unique locations \citep{Hastie01061989}, order them by the resulting coastal coordinate, and divide this ordering into five coastal bins, Coast~1 through Coast~5 (Figure~\ref{fig:supp-storm-coast-semivar}(a)). Residuals $e_i$ come from a generalized additive model fitted by restricted maximum likelihood with the R package \texttt{mgcv} \citep{Wood2011}, with tensor-product smooths in $(\Delta P,R_p)$ and in the landfall coordinates and univariate smooths in $V_f$, $\theta$, and $B$ (Table~2). Within each coastal bin, a random subsample of at most 1,200 runs is used, and the Euclidean distances between these runs, computed from the seven inputs each scaled to $[0,1]$, are divided into 15 intervals at their empirical quantiles. For each interval, the empirical residual semivariogram \citep[pp.~69--83]{Cressie1993} averages $(e_i-e_j)^2/2$ over the pairs whose distance falls in it, and it is plotted against the median distance in the interval.

\begin{figure}[!htbp]
\centering
\begin{minipage}[c]{0.54\linewidth}\centering\includegraphics[width=\linewidth,keepaspectratio]{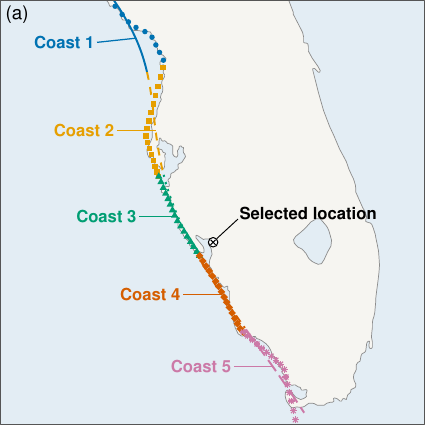}\end{minipage}\hfill\begin{minipage}[c]{0.42\linewidth}\centering\includegraphics[width=\linewidth,keepaspectratio]{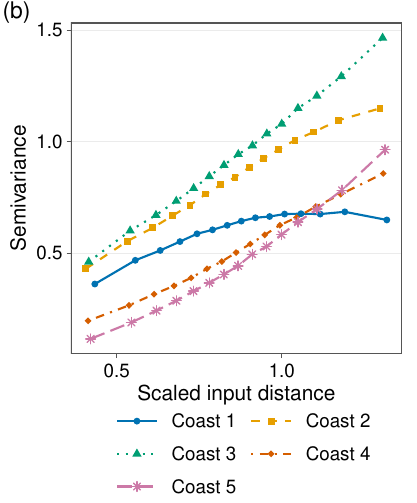}\end{minipage}
\caption{(a) Five coastal bins obtained by ordering unique landfall locations along a fitted principal curve, drawn as lines; symbols mark a subset of the landfall locations, and the labeled point marks the selected response location. (b) Empirical residual semivariograms, which differ in level and shape across the bins. Colors and symbols identify the coastal bins in both panels, and line types also distinguish them in (b).}
\label{fig:supp-storm-coast-semivar}
\end{figure}

The residual semivariograms in Figure~\ref{fig:supp-storm-coast-semivar}(b) differ in level and shape across the five coastal bins. The semivariogram for Coast 1 levels off at the largest distances, whereas those for Coasts 2 and 3 keep increasing and reach the largest values, and those for Coasts 4 and 5 start lowest and rise steadily over the whole distance range. Residual dependence therefore changes along the coastline after the additive trends are removed.

\needspace{6\baselineskip}
\section{Complete Predictive Comparisons and Ablation Scores}\label{sec:complete-predictive}

\subsection{Predictive Comparisons}

Table~\ref{tab:complete-scores} gives the mean scores of all methods in the two- and 10-dimensional comparisons of Section~4.4. The CRPS is smallest for TGP in the two-dimensional example and for TVGP in the 10-dimensional example. Among the methods with near-nominal coverage in the two-dimensional example, TGP has the narrowest intervals. In the 10-dimensional example, TGP has the coverage closest to the nominal level, with intervals more than twice as wide as those of TVGP.

\begin{table}[!htbp]
\centering
\caption{Mean scores in the two- and 10-dimensional comparisons of Section~4.4: RMSE, CRPS, coverage and mean width of 95\% latent intervals, and time for fitting and prediction. Means are over the ten datasets, except that the two-dimensional JGP means exclude one dataset whose CRPS and interval width diverged.}
\label{tab:complete-scores}
\vspace{2pt}
\begin{tabular}{lccccr}
\toprule
Method & RMSE & CRPS & Coverage & Width & Time\\
\midrule
\multicolumn{6}{l}{\emph{Two-dimensional example}}\\
TVGP & 0.0194 & 0.00883 & 0.946 & 0.0603 & 14.1 min\\
TGP & 0.0241 & 0.00733 & 0.957 & 0.0466 & 10.6 h\\
SPGP & 0.0286 & 0.0109 & 0.960 & 0.0862 & 12.0 h\\
DGP VEC & 0.0299 & 0.0127 & 0.945 & 0.0846 & 7.4 h\\
SVecchia & 0.0337 & 0.0138 & 0.908 & 0.0712 & 9.0 min\\
JGP & 0.0328 & 0.0133 & 0.858 & 0.0605 & 1.7 min\\
\midrule
\multicolumn{6}{l}{\emph{10-dimensional example}}\\
TVGP & 0.416 & 0.203 & 0.882 & 1.08 & 71.4 min\\
TGP & 0.789 & 0.424 & 0.947 & 2.76 & 2.7 h\\
DGP VEC & 0.797 & 0.460 & 0.780 & 1.89 & 19.0 h\\
SVecchia & 0.766 & 0.535 & 0.343 & 0.695 & 32.0 min\\
JGP & 0.980 & 0.665 & 0.302 & 0.758 & 1.2 min\\
\bottomrule
\end{tabular}
\end{table}

\begin{figure}[!htbp]
\centering
\includegraphics[width=0.85\linewidth,keepaspectratio]{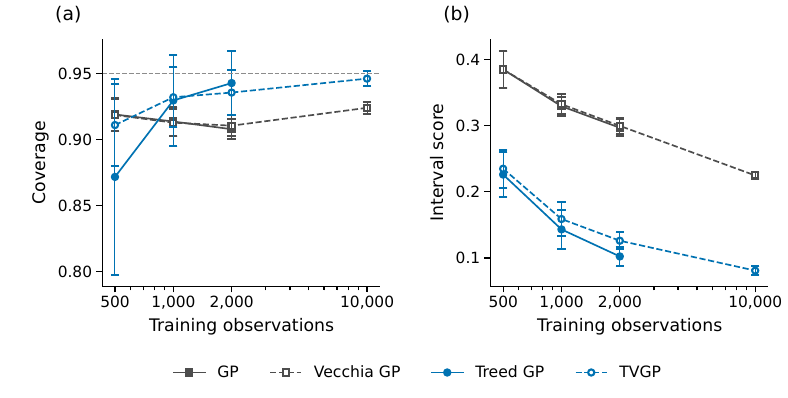}
\caption{Additional ablation scores: (a) coverage of 95\% latent intervals and (b) 95\% interval score. Points and bars show means and one standard deviation over the same ten paired datasets at each training size. Colors and line types are as in Figure~2, and the TVGP settings and specifications are those of Section~B and Section~4.3. The dashed horizontal line marks 0.95 coverage. The treed fits have smaller interval scores at every training size and, from $n=1{,}000$ on, coverage closer to 0.95.}
\label{fig:ablation-calibration}
\end{figure}

\subsection{Ablation Calibration}

The factorial design in Section~4.3 compares a single region with a learned tree, each under full-covariance and Vecchia likelihoods. All four specifications use the TVGP settings of Section~B. Figure~\ref{fig:ablation-calibration} adds the coverage of 95\% latent intervals and the 95\% interval score of \citet{Gneiting2007} to the comparison in Figure~2. For an interval $(l,u)$ and the latent value $f$ at a test input, the interval score is $(u-l)+40\{(l-f)\mathbb I(f<l)+(f-u)\mathbb I(f>u)\}$, so smaller values are better. At $n=2{,}000$, mean coverage in Figure~\ref{fig:ablation-calibration}(a) is 0.908 for GP, 0.911 for Vecchia GP, 0.943 for treed GP, and 0.936 for TVGP. At $n=10{,}000$, coverage is 0.924 for Vecchia GP and 0.946 for TVGP. At $n=2{,}000$ and $10{,}000$, the treed fits have less than half the mean interval score of the single-region fits (panel (b)).

\section{SMC Algorithm and Complexity Analysis}
\label{sec:smc_algorithm_complexity}
\begin{algorithm}[!htbp]
\small
\caption{Sequential Monte Carlo Inference for TVGP}
\label{alg:smc_tvgp_supp}
\begin{algorithmic}[1]
\Require Data $(\mathbf y,\mathbf X)$, particle count $N$, prior parameters, and candidate sets.
\Ensure Weighted set of trees $\{(T_i,W_i)\}_{i=1}^N$.

\State Set $t=0$; initialize each particle with the root-only tree $T_0^i$ and the list of active leaves $\mathcal U_0^i$ containing the root; set $W_0^i=1/N$.
\While{at least one particle has an active leaf}
    \State $t \leftarrow t + 1$
    \For{$i = 1$ to $N$}
        \State Copy $(T_{t-1}^i,\mathcal U_{t-1}^i)$ to $(T_t^i,\mathcal U_t^i)$ and set $w_t^i=1$.
        \If{$\mathcal U_t^i\ne\varnothing$}
            \State Remove the first leaf $A$ from $\mathcal U_t^i$.
            \If{$\mathcal J(A)=\varnothing$}
                \State Keep $A$ terminal.
            \Else
                \State Obtain $\widehat\Phi$ for $A$ and all admissible child pairs.
                \State Compute $\mathrm{BF}_A(j,c)$ from Equation~(7) and $Z_j(A)$ and $\kappa_A$ from Section~3.2.
                \State Draw $S(A)$ and, if $S(A)=1$, then $J(A)$ and $L(A)$ from $\mathbb{Q}_A$ in Proposition~3.4.
                \If{$S(A)=1$}
                    \State Split $A$ on axis $J(A)$ at $L(A)$ and append its left and right children to $\mathcal U_t^i$.
                \Else
                    \State Keep $A$ terminal.
                \EndIf
                \State Set $w_t^i=\kappa_A$.
            \EndIf
        \EndIf
        \State Set $\omega_t^i=W_{t-1}^iw_t^i$.
    \EndFor

    \State Normalize $W_t^i=\omega_t^i/\sum_k\omega_t^k$ and compute $\mathrm{ESS}=1/\sum_i(W_t^i)^2$.
    \If{$\mathrm{ESS}<N/2$}
        \State Draw $a_1,\ldots,a_N$ independently from $\operatorname{Categorical}(W_t^1,\ldots,W_t^N)$.
        \State Replace particle $i$ by a copy of particle $a_i$, including its tree and list of active leaves, for $i=1,\ldots,N$; set all weights to $1/N$.
    \EndIf
\EndWhile
\State \Return $\{(T_t^i, W_t^i)\}_{i=1}^N$.
\end{algorithmic}
\end{algorithm}

\paragraph{Computational complexity.}
For a node $A$ with $n_A$ training observations, each optimizer iteration for the Vecchia--Laplace approximation $\widehat\Phi(A)$ costs $\mathcal{O}(n_Am^3)$ for the Vecchia factors with conditioning sets of size $m$ and $\mathcal{O}(n_Ad^2m^2)$ for the Hessian of Proposition~3.3, a total of $\mathcal{O}\{n_Am^2(m+d^2)\}$. Evaluating the locally optimal proposal $\mathbb Q_A$ of Proposition~3.4 requires the marginal likelihoods of both children for at most $dK$ candidate splits $(j,c)$, where $d$ is the input dimension, and $K$ is the number of candidate cut points per input. Thus, the per-node cost is $\mathcal{O}\{dK\,n_Am^2(m+d^2)\}$ per optimizer iteration.
Nodes at a fixed depth are disjoint, so $\sum_{A:\,d_A=k} n_A\le n$. Let $D_T$ denote the largest realized depth among the $N$ particles. Then $\sum_A n_A\le n(D_T+1)$ for each particle, yielding the leading total cost
\[
\mathcal{O}\!\big\{N\,dK\,m^2(m+d^2)\,n(D_T+1)\big\}
\]
for $N$ particles, per optimizer iteration.
Weight normalization and resampling add $\mathcal{O}(N)$ work per SMC stage and are lower order in this setting. The bound ignores the reuse of cached node evaluations across particles, so observed computing times need not grow linearly in $N$ (Section~\ref{sc:particle_sensitivity}).
No depth cap is imposed, except in the sampler comparisons of Section~4.2 and Section~C.6, which use maximum depth three. The minimum leaf size bounds the size of the tree: if each child created by a split contains at least $n_{\min}$ observations, then the number of leaves is at most $\lfloor n/n_{\min}\rfloor$ and hence $D_T\le \lfloor n/n_{\min}\rfloor-1$. Algorithm~\ref{alg:smc_tvgp_supp} gives the complete SMC algorithm.

\FloatBarrier

\end{document}